\newtheorem{Lem}{\textbf{Lemma}}
\newtheorem{Def}{\textbf{Definition}}
\newtheorem{Rmk}{\textbf{Remark}}
\newtheorem{Cor}{\textbf{Collary}} 
\newtheorem{Prop}{\textbf{Proposition}}
\theoremstyle{thmstyleone}
\theoremstyle{thmstyletwo}%
\theoremstyle{thmstylethree}%
\begin{document}

\title[Article Title]{Moving Obstacle Collision Avoidance via Chance-Constrained Model Predictive Control with Control Barrier Function}
% {Safe Repulsion-Driven Swarm Herding through Backstepping Control Barrier Functions}
\author[1]{Ming Li}\email{ming3@kth.se} 
\author*[2]{Zhiyong Sun}\email{zhiyong.sun@pku.edu.cn} % 
\author[3]{Zirui Liao}\email{by2003110@buaa.edu.cn}
\author[4]{Siep Weiland}\email{s.weiland@tue.nl}

\affil[1]{
  \orgdiv{Division of Decision and Control Systems}, 
  \orgname{KTH Royal Institute of Technology}, 
  \orgaddress{
    \city{Stockholm}, 
    \postcode{100 44}, 
    \country{Sweden}}}
\affil[2]{
  \orgdiv{Department of Mechanics and Engineering Science}, 
  \orgname{College of Engineering}, 
  \orgname{Peking University}, 
  \orgaddress{
    \city{Beijing}, 
    \postcode{100871}, 
    \state{Beijing}, 
    \country{China}}}
\affil[3]{
  \orgdiv{School of Automation Science and Electrical Engineering}, 
  \orgname{Beihang University}, 
  \orgaddress{
    \city{Beijing}, 
    \postcode{100191}, 
    \state{Beijing},                
    \country{China}}}
\affil[4]{
  \orgdiv{Department of Electrical Engineering}, 
  \orgname{Eindhoven University of Technology (TU/e)}, 
  \orgaddress{
    \city{Eindhoven}, 
    \postcode{310024}, 
    \state{5600 MB},             
    \country{The Netherlands}}}
%%==================================%%
%% Sample for unstructured abstract %%
%%==================================%%

\abstract{Model predictive control (MPC) with control barrier function (CBF) is a promising solution to address the moving obstacle collision avoidance (MOCA) problem. Unlike MPC with distance constraints (MPC-DC), this approach facilitates early obstacle avoidance without the need to increase prediction horizons. However, the existing MPC-CBF method is deterministic and fails to account for perception uncertainties. This paper proposes a generalized MPC-CBF approach for stochastic scenarios, which maintains the advantages of the deterministic method for addressing the MOCA problem. Specifically, the chance-constrained MPC-CBF (CC-MPC-CBF) technique is introduced to ensure that a user-defined collision avoidance probability is met by utilizing probabilistic CBFs. However, due to the potential empty intersection between the reachable set and the safe region confined by CBF constraints, the CC-MPC-CBF problem can pose challenges in achieving feasibility. To address this issue, we propose a sequential implementation approach that first solves a standard MPC optimization problem, followed by a predictive safety filter optimization. The safety filter is handled using a novel iterative convex optimization algorithm. This sequential approach improves feasibility compared to the CC-MPC-CBF optimization although it sacrifices stability performance. We apply our proposed algorithm to a double integrator system for MOCA, and we showcase its robustness to obstacle measurement uncertainties and favorable feasibility properties.}

\keywords{Collision avoidance; chance-constrained; model predictive control; control barrier function; }

%%\pacs[JEL Classification]{D8, H51}

%%\pacs[MSC Classification]{35A01, 65L10, 65L12, 65L20, 65L70}

\maketitle

\section{Introduction}\label{sec1}

Moving Obstacle Collision Avoidance (MOCA) aims to enable objects to navigate safely in dynamic environments with moving obstacles. However, this critical issue presents a difficult challenge for ensuring safety in dynamic and unpredictable environments across numerous applications, including robotic arms~\cite{Manipulator_Control}, autonomous vehicles~\cite{Autonomous_Vehicles}, and formation control~\cite{Formation_Control}. To address this challenge, a range of methods have been developed, such as velocity obstacles~\cite{Obstacle_Velocity}, potential fields~\cite{Potential_fields}, and model predictive control (MPC)~\cite{MPC_1,MPC_2,MPC_CBF}. Among the various solutions, MPC is a favored approach for solving the MOCA problem due to its capability to predict the future behavior of the system and optimize a control action to satisfy safety constraints.

As noted in~\cite{MPC_CBF}, the existing literature on MPC for addressing the MOCA problem commonly relies on Euclidean norms to formulate distance constraints. This approach causes robots to delay obstacle avoidance until obstacles are very close, activating the control process only once the reachable set intersects with obstacles. As a result, robots often produce aggressive corrective maneuvers and oscillatory trajectories. To mitigate this reactive behavior, recent studies have proposed several alternative methods. These include: i) Geometry-aware metrics.  This approach replaces the Euclidean norm with weighted Mahalanobis or anisotropic ellipsoidal distances, which allow a more accurate representation of robot–obstacle geometry and sensing anisotropies~\cite{WeightedDist_IROS19, MahalanobisCBF2022, AnisotropicCBF_TRO22}; ii) Velocity-dependent safety margins. In this approach, it expands collision avoidance distances using speed-scaled buffers or time-to-collision constraints, enabling earlier hazard anticipation and response \cite{SpeedAwareSMC18, Faessler2018SpeedAware}; and iii) MPC with Control Barrier Function (CBF), known as MPC-CBF \cite{MPC_CBF}. By combining predictive optimization and velocity-aware safety envelopes, MPC-CBF proactively avoids collisions earlier than traditional distance-constrained MPC or greedy, instantaneous CBF-QP. In this paper, we advocate for the MPC-CBF approach because it effectively integrates the predictive capabilities of MPC with the formal safety guarantees of CBFs, which generate smooth and provably safe trajectories with reduced conservatism.

However, a notable limitation of the MPC-CBF formulation in \cite{MPC_CBF} is its assumption of perfect knowledge of system dynamics. In practical applications, uncertainties in system dynamics, environment perception, and modeling errors are inevitable and can significantly degrade controller performance or even lead to unsafe behaviors. Addressing these uncertainties within the MPC-CBF framework is therefore crucial for ensuring robust and reliable operation. For the MPC-CBF formulation proposed in~\cite{MPC_CBF} (which ignores uncertainty), it can be extended to handle both bounded and unbounded  uncertainties by leveraging recent advances in robust, adaptive, data-driven, and stochastic CBF theory. For bounded (deterministic) uncertainties, robust and ISSf-CBFs explicitly subtract an infinity norm bound from the CBF derivative condition to create an invariant tube around the nominal trajectory, enabling tube-based robust MPC-CBF schemes~\cite{Robust_CBF1,Robust_CBF2,ISSf,Tube_Robust_MPC_CBF}. Alternatively, adaptive CBFs tighten the barrier constraint online as parameter estimates converge and guarantee safety once a finite-time excitation condition is met \cite{adactive_cbf,xiao2021adaptive}. Data-driven or Bayesian variants use Gaussian-process models to wrap the CBF inequality in a high-probability confidence bound that shrinks as more data arrive, yielding statistical robustness when combined with predictive safety certification~\cite{Data_driven_CBF,GP_CBF_Bayesian}. When uncertainties are unbounded, stochastic CBF methods offer valuable insights by generalizing barrier conditions to stochastic (Itô) dynamics. This is achieved by constraining the Kolmogorov backward operator, enabling finite-horizon MPC-CBF formulations that directly control the probability of safety violations~\cite{Stochastic_CBF}. Additionally, these principles have been integrated with learning-based approaches: for example, safe reinforcement learning methods filter exploratory actions via real-time CBF~\cite{RL_CBF_SRL}, and model-predictive safety shields employ minimal-intervention MPC-CBF backups activated only when nominal policies approach unsafe boundaries \cite{MPSC_CBF}.  Alternatively, chance-constrained MPC~\cite{schwarm1999chance} can be integrated with CBF to enforce probabilistic safety guarantees. While recent studies have begun to explore this integration under unbounded uncertainties~\cite{wang2024stochastic,liu2025flexible}, several important issues remain insufficiently addressed. These include the ability to ensure early collision avoidance, as well as a thorough understanding of how key design parameters influence feasibility, computational efficiency, and safety assurances.

In this paper, we investigate the MOCA problem in stochastic scenarios where obstacle measurements are subject to unbounded uncertainties. Specifically, we extend the MPC-CBF framework by incorporating chance-constrained formulations. Our goal is to generalize the results from~\cite{MPC_CBF}, originally developed for deterministic settings, to more realistic stochastic environments and to provide a comprehensive analysis of feasibility, computational performance, and safety guarantees. The main contributions of this paper are three-fold:
\begin{enumerate}
    \item We propose the CC-MPC-CBF to address the MOCA problem in a stochastic environment, which extends the (deterministic) MPC-CBF solution presented in~\cite{MPC_CBF}. The CC-MPC-CBF approach combines MPC with chance-constrained CBFs to handle stochastic uncertainties and provide probabilistic guarantees of safety. This approach allows for some violations of the CBF constraint with a specified probability over an infinite time horizon while ensuring that the probability of collision is below a user-specified threshold.  
    \item We develop a sequential implementation approach to address the CC-MPC-CBF to improve the feasibility, which includes two sub-optimization problems, i.e., a standard MPC and a predictive safety filter. The predictive safety filter is addressed using a novel iterative convex optimization algorithm that provides solutions efficiently. By utilizing this methodology, we can obtain comparable performances as the CC-MPC-CBF with improved feasibility and fast computation speed.
    % outperforming existing methods in terms of computation speed.
    % Afterward, to further increase the computation speed, we transform the problem into a set of quadratic-constrained quadratic programs (QCQPs) that are solved in an iterative manner. 
    \item We demonstrate the effectiveness of our developed algorithms by applying them to a real MOCA example, and showcase their advantageous properties through numerous simulation results. Specifically, we highlight that our approach is robust to stochastic sensing uncertainties in obstacle measurements, achieves a high success rate for MOCA, and is feasible for real-world applications.
\end{enumerate}
The remainder of this paper is organized as follows. Section~\ref{Preliminaries} introduces the necessary mathematical preliminaries and presents the problem formulation considered in this paper. In Section~\ref{Main_results}, we develop the CC-MPC-CBF framework and analyze how key design parameters influence feasibility, computational efficiency, and safety guarantees. Section~\ref{Sequantial_Approach} describes a sequential implementation of the proposed method and introduces an iterative convex optimization algorithm for efficient solution of the chance-constrained problem. In Section~\ref{Simulation}, we validate the proposed framework through a simulation of a double integrator system in a MOCA setting. Finally, Section~\ref{Conclusions} concludes the paper.
\section{Preliminaries and Problem Statement}\label{Preliminaries} 
In this section, we provide an overview of the system models and definitions of discrete-time CBFs. We also examine the limitations of MPC with CBF and its inability to address system uncertainties. This motivates the chance-constrained formulation of MPC-CBF.
\subsection{System Models}
Consider the following model governing the motion of the robot.	
\begin{equation}\label{Affine_Control_System}
	    \mathbf{x}_{k+1}=\mathbf{f}(\mathbf{x}_{k})+\mathbf{g}(\mathbf{x}_{k})\mathbf{u}_{k},
\end{equation}
where $\mathbf{x}_{k}\in\mathcal{X}\subset\mathbb{R}^{n}$, $\mathbf{u}_{k}\in\mathcal{U}\subset\mathbb{R}^{m}$, and the functions $\mathbf{f}:\mathbb{R}^{n}\rightarrow\mathbb{R}^{n}$ and $\mathbf{g}:\mathbb{R}^{n}\rightarrow\mathbb{R}^{n\times m}$, $m$ and $n$ are known constants. The dynamical model for an obstacle is given as:
\begin{equation}\label{eq:target_motion}
\mathbf{o}_{k+1}= \bm{\xi}(\mathbf{o}_{k}),
\end{equation}
where $\mathbf{o}_{k}\in\mathbb{R}^{n_{\mathrm{o}}}, n_{\mathrm{o}}\leq n,$ denotes the state of the obstacle at time $k$, and 
$\bm{\xi}(\cdot)$ is a nonlinear state transition function.
\subsection{Discrete-Time Control Barrier Functions}
Let a closed convex set $\mathcal{C}\subset\mathbb{R}^{n}\times\mathbb{R}^{n_{\mathrm{o}}}$ be the $0$-superlevel set of a function $h:\mathbb{R}^{n}\times\mathbb{R}^{n_{\mathrm{o}}}\rightarrow\mathbb{R}$, which is defined as
\begin{equation}\label{Invariant_Set}
		\begin{aligned}
			\mathcal{C}_{k} & \triangleq\left\{(\mathbf{x}_{k},\mathbf{o}_{k})\subseteq \mathbb{R}^{n}\times\mathbb{R}^{n_{\mathrm{o}}}: h(\mathbf{x}_{k},\mathbf{o}_{k}) \geq 0\right\} \\
			\partial \mathcal{C}_{k} & \triangleq\left\{(\mathbf{x}_{k},\mathbf{o}_{k})\subseteq \mathbb{R}^{n}\times\mathbb{R}^{n_{\mathrm{o}}}: h(\mathbf{x}_{k},\mathbf{o}_{k})=0\right\} 
		\end{aligned}
\end{equation}
Herein, we assume that $\mathcal{C}_{k}$ is nonempty.
% \begin{Def}
% (Forward Invariance and Safety) A set $\mathcal{C}_{k}\subset\mathbb{R}^{n}\times\mathbb{R}^{n}$ is forward invariant if for every $(\mathbf{x}_{0},\mathbf{o}_{0})\in\mathcal{C}_{0}$, the solution $(\mathbf{x}_{k},\mathbf{o}_{k})$ to \eqref{Affine_Control_System} and \eqref{eq:target_motion} satisfies $(\mathbf{x}_{k},\mathbf{o}_{k})\in\mathcal{C}_{k}$ for all $k\in [0,k_{\max}), k\in\mathbb{N}$. The system \eqref{Affine_Control_System} and \eqref{eq:target_motion} is safe on the set $\mathcal{C}_{k}$ if the set $\mathcal{C}_{k}$ is forward invariant, and $\mathcal{C}_{k}$ is a safety set.
% \end{Def}
% \begin{Def}
% (Extended class $\mathcal{K}$ function, $\mathcal{K}_{e}$). A continuous function $\alpha:(-b,a)\rightarrow\mathbb{R}$, with $a,b>0$, is an extended class $\mathcal{K}$ function ($\alpha\in\mathcal{K}_{e}$) if $\alpha(0)=0$, and $\alpha$ is strictly monotonically increasing.
% \end{Def}
\begin{Def}\label{CBF_Def}
(Discrete-time CBF~\cite{Discrete_CBF}) Consider the discrete-time system \eqref{Affine_Control_System} and \eqref{eq:target_motion}. Given a set $\mathcal{C}_{k}$ defined by \eqref{Invariant_Set} for a function $h:\mathbb{R}^{n}\times\mathbb{R}^{n_{\mathrm{o}}}\rightarrow\mathbb{R}$, the function $h$ is a CBF defined on set $\mathbb{R}^{n}\times\mathbb{R}^{n_{\mathrm{o}}}$ if there exists a function $\alpha\in\mathcal{K}_{\infty}$ such that
%control input $\mathbf{u}_{k}:\mathcal{C}\rightarrow\mathcal{U}$ such that
\begin{equation}   \inf\limits_{\mathbf{u}\in\mathcal{U}}\Delta  h(\mathbf{x}_{k},\mathbf{o}_{k},\mathbf{u}_{k})\geq-\alpha(h(\mathbf{x}_{k},\mathbf{o}_{k})), 
\end{equation}
where $\Delta h(\mathbf{x}_{k},\mathbf{o}_{k},\mathbf{u}_{k}):=h(\mathbf{x}_{k+1},\mathbf{o}_{k+1})-h(\mathbf{x}_{k},\mathbf{o}_{k})$.
\end{Def}
We follow the result of \cite{Discrete_CBF} and select the $\mathcal{K}_{\infty}$ function $\alpha(h(\mathbf{x}_{k},\mathbf{o}_{k}))$ to be $\gamma h(\mathbf{x}_{k},\mathbf{o}_{k}), 0<\gamma\leq 1$. Then the condition for CBF is defined as:
\begin{equation}\label{CBC}
\mathrm{CBC}_{k}\triangleq h(\mathbf{x}_{k+1},\mathbf{o}_{k+1})-(1-\gamma)h(\mathbf{x}_{k},\mathbf{o}_{k})\geq 0,
\end{equation}
where $\mathrm{CBC}_{k}=\mathrm{CBC}(\mathbf{x}_{k},\mathbf{o}_{k},\mathbf{u}_{k})$.
\subsection{MPC with CBF}
We consider using MPC with CBF to address the MOCA problem~\eqref{MPC_CBF}. It solves the following constrained finite-time optimization control problem with horizon $N$ at each time instant $t_{k}$ with $i=0, \ldots, N-1$.
\begin{subequations}\label{MPC_CBF}
\begin{align}
% \J_k^*\left(\mathbf{x}_k\right)=
&\min _{\mathbf{u}_{k: k+N-1 \mid k}}J(\mathbf{x}_{k:k+N \mid k},\mathbf{u}_{k:k+N-1 \mid k})   \label{Za}\\
&\text { s.t. } \, \mathbf{x}_{k+i+1 \mid k}=\mathbf{f}(\mathbf{x}_{k+i \mid k})+\mathbf{g}(\mathbf{x}_{k+i \mid k})\mathbf{u}_{k+i \mid k},\label{Zb} \\
&\qquad\mathbf{x}_{k+i \mid k} \in \mathcal{X}, \quad\mathbf{u}_{k+i \mid k} \in \mathcal{U},  \label{Zc}\\
&\qquad\mathbf{x}_{k \mid k}=\mathbf{x}_k, \quad\,\,\,\,\mathbf{x}_{k+N \mid k} \in \mathcal{X}_f,\label{Zd}\\
% &h\left(\mathbf{x}_{k+i+1 \mid k}, \mathbf{o}_{k+i+1 \mid k}\right)\notag\\
% &\quad+(1-\gamma)h\left(\mathbf{x}_{k+i \mid k}, \mathbf{o}_{k+i \mid k}\right)\geq 0, i=0, \ldots, N-1,\label{Zf}
&\qquad\mathrm{CBC}_{k+i|k}\geq 0,\label{Ze}
\end{align}
\end{subequations}
where the cost function in~\eqref{Za} is the sum of the terminal cost $p\left(\mathbf{x}_{k+N \mid k}\right)$ and stage cost $\sum_{i=0}^{N-1} q\left(\mathbf{x}_{k+i \mid k}, \mathbf{u}_{k+i \mid k}\right)$, which means that $J(\mathbf{x}_{k:k+N \mid k},\mathbf{u}_{k:k+N-1 \mid k})=p\left(\mathbf{x}_{k+N \mid k}\right)+\sum_{i=0}^{N-1} q\left(\mathbf{x}_{k+i \mid k}, \mathbf{u}_{k+i \mid k}\right)$; \eqref{Zb} describes the system dynamics; \eqref{Zc} shows the state and input constraints along the horizon; and \eqref{Zd} provides the constraints on initial condition and terminal set $\mathcal{X}_{f}$. The CBF constraint $\mathrm{CBC}_{k+i|k}=h\left(\mathbf{x}_{k+i+1 \mid k}, \mathbf{o}_{k+i+1 \mid k}\right)-(1-\gamma)h\left(\mathbf{x}_{k+i \mid k}, \mathbf{o}_{k+i \mid k}\right)\geq 0$ given in \eqref{Ze} guarantees the forward invariance of the safety set $\mathcal{C}_{k+i|k}$ as defined in \eqref{Invariant_Set}.
\begin{Rmk}
    The constraint $\mathbf{x}_{k+i \mid k} \in \mathcal{X}, i=1, \ldots, N$ (corresponding to the state constraint in~\eqref{Zc} and~\eqref{Zd}) and $\mathrm{CBC}_{k+i|k}\geq 0$ are both state constraints. However, to emphasize the safety constraint of the CBF, we distinguish CBF constraint from the state constraints $\mathbf{x}_{k+i \mid k} \in \mathcal{X}, i=1, \ldots, N$. This distinction highlights the significance of safety as a special constraint. Additionally, there are multiple ways that the MPC-CBF formulation can be infeasible, such as the constraints on $\mathbf{x}_{k+i \mid k} \in \mathcal{X}, i=1, \ldots, N$ and the constraint in~\eqref{Ze} are conflicting.
\end{Rmk}
% Besides, the constraint $\mathbf{x}_{k+i \mid k} \in \mathcal{X}, i=1, \ldots, N$ can be a control Lyapunov function (CLF)~\cite{MPC_CLF} or a constraint with constant lower and upper bounds (i.e., $\mathcal{X}=\left\{\mathbf{x}_{k+i \mid k} \in \mathbb{R}^n: \mathbf{x}_{\min } \leq \mathbf{x}_{k+i \mid k} \leq \mathbf{x}_{\max }, i=1, \ldots, N\right\}$). 
\subsection{Problem Statement}
Due to inaccurate localization or sensing, perfect measurements of the obstacle are unavailable. We assume that the measurements of the obstacle are corrupted by stochastic noise $\bm{\omega}_{k}$ and are generated with the following model.
\begin{equation}\label{Obstacle_Traj}
\begin{split}
\mathbf{o}_{k+1}=\bm{\xi}(\mathbf{o}_{k})+\bm{\omega}_{k},
\end{split}
\end{equation}
where $\bm{\omega}_{k}$ is white driving noise, which follows a Gaussian distribution $\mathcal{N}(\mathbf{0},\sigma^2\mathbf{I}_{n_{\mathrm{o}}\times n_{\mathrm{o}}})$, and $\sigma$ is the standard deviation of the noise. Since the stochastic noise is introduced in~\eqref{Obstacle_Traj}, the condition in~\eqref{Ze} cannot be satisfied anymore. To address this issue, we are interested in modifying the MPC-CBF in~\eqref{MPC_CBF} to handle stochastic uncertainties. Therefore, the research problem of this paper is formally stated as follows.
% Instead, we require that all constraints are jointly satisfied with probability at least $1-\delta$, $\delta\in(0,1)$. 

\vspace{3mm}
\noindent
\textbf{Problem Statement.}~\textit{Develop a new MPC-CBF formulation to address the MOCA problem in a stochastic environment, which allows designing a controller that is robust to the random noise in~\eqref{Obstacle_Traj}.}

% \begin{Rem}
%     In our consideration, we assume that the state of the robot $\mathbf{x}_{k}$ can be accurately obtained while the measurements of the obstacles $\mathbf{o}_{k}$ are corrupted by noise. This is a typical assumption in many robot navigation applications. For example, the 
    
%     the environment in which the UAVs operate is unknown and must be characterized by the vehicles. 
% \end{Rem}
\section{Chance Constrained MPC-CBF}\label{Main_results}
In this section, we formulate the CC-MPC-CBF to handle the influence of stochasticity that arises from noisy obstacle measurements. Specifically, we assume that the uncertainty in an obstacle measurement follows a Gaussian distribution, and we transform the chance constraint into a deterministic constraint with their mean and variance. By following the formulation in~\eqref{MPC_CBF}, the CC-MPC-CBF is provided.
%  \begin{figure*}[b]
% \begin{equation}\label{Parameters_CBF_matrices2}
%     \begin{split}  
%     \mathbf{H}(\mathbf{x}_{k:k+N \mid k})=\begin{bmatrix}
% 4\sigma^{2}\mathbf{g}(\mathbf{x}_{k+i|k})^{\top}\mathbf{W}^{\top}\mathbf{W}\mathbf{g}(\mathbf{x}_{k+i|k}) & 4\sigma^{2}\mathbf{g}(\mathbf{x}_{k+i|k})^{\top}\mathbf{W}^{\top}\mathbf{W}(\mathbf{f}(\mathbf{x}_{k+i|k})-\bm{\xi}(\mathbf{o}_{k+i|k}))\\
% 4\sigma^{2}(\mathbf{f}(\mathbf{x}_{k+i|k})-\bm{\xi}(\mathbf{o}_{k+i|k}))^{\top}\mathbf{W}^{\top}\mathbf{W}\mathbf{g}(\mathbf{x}_{k+i|k}) & 4\|\sigma\mathbf{W}(\mathbf{f}(\mathbf{x}_{k+i|k})-\bm{\xi}(\mathbf{o}_{k+i|k}))\|^{2}+2\sigma^{4}\mathrm{tr}(\mathbf{W}^{\top}\mathbf{W})
% \end{bmatrix}
%     \end{split}
% \end{equation}
% \end{figure*}
\subsection{CBF and Chance Constraints}
Hyperellipsoid is one popular choice of a discrete-time CBF and is commonly used for representing obstacles (or the region of operation where the robot is allowed to move). In our formulation, we continue to use hyperellipsoids to parameterize CBFs, which are denoted as:
\begin{equation}\label{hyperellipsoid}
    \begin{split}
        h(\mathbf{x}_{k},\mathbf{o}_{k})=\|\mathbf{x}_{k}-\mathbf{o}_{k}\|_{\mathbf{W}}^{2}-1,
    \end{split}
\end{equation}
where $\mathbf{W}\in\mathbb{R}^{n_{\mathrm{o}}\times n_{\mathrm{o}}}$ is a symmetric positive definite matrix and $\|\mathbf{x}_{k}-\mathbf{o}_{k}\|_{\mathbf{W}}^2=(\mathbf{x}_{k}-\mathbf{o}_{k})^{\top}\mathbf{W}(\mathbf{x}_{k}-\mathbf{o}_{k})$. It should be noted that in the equation \eqref{hyperellipsoid}, we assumed that the dimension of $\mathbf{x}_{k}$ is $n_{\mathrm{o}}$. However, when the dimension of $\mathbf{x}_{k}$ is greater than $n_{\mathrm{o}}$, the variables in~\eqref{hyperellipsoid} should correspond to a partial state of $\mathbf{x}_{k}$, rather than the full state.

Due to the stochastic noise in~\eqref{Obstacle_Traj}, we consider the chance-constrained optimization problem to accommodate uncertainty with $\delta\in(0,1)$ as the desired confidence of probabilistic safety. Then the chance constraint for collision avoidance is given as follows.
\begin{equation}\label{Standard_Guassian}
    \begin{split}
   \mathbb{P}(\mathrm{CBC}_{k+i|k}\geq\zeta|\mathbf{x}_{k+i|k},\mathbf{o}_{k+i|k},\mathbf{u}_{k+i|k})\geq\delta,
    \end{split}
\end{equation}
where $\mathbb{P}(\cdot)$ denotes the probability of a condition to be true, the value of $\zeta,\delta\in\mathbb{R}^{+}$ are defined by users which vary for different requirements. Herein, $\delta$ indicates the collision avoidance probability.

Note that $ \mathrm{CBC}_{k+i|k}=\|\mathbf{x}_{k+i+1 \mid k}-\bm{\xi}(\mathbf{o}_{k+i|k})\|_{\mathbf{W}}^{2}+\bm{\omega}_{k+i|k}^{\top}\mathbf{W}\bm{\omega}_{k+i|k}+2(\mathbf{x}_{k+i+1 \mid k}-\bm{\xi}(\mathbf{o}_{k+i|k}))^{\top}\mathbf{W}\bm{\omega}_{k+i|k}-(1-\gamma)h\left(\mathbf{x}_{k+i \mid k}, \mathbf{o}_{k+i \mid k}\right)-1$ does not follow a Gaussian distribution. This is due to that there exists a non-Gaussian term $\bm{\omega}_{k+i|k}^{\top}\mathbf{W}\bm{\omega}_{k+i|k}+2(\mathbf{x}_{k+i+1 \mid k}-\bm{\xi}(\mathbf{o}_{k+i|k}))^{\top}\bm{\omega}_{k+i|k}$ in $\mathrm{CBC}_{k+i|k}$. However, as pointed out in~\cite{Approximate_Gaussian}, the non-Gaussian probability density function resulting from the sum of a squaring a Gaussian random variable and a Gaussian term can be well approximated by the Gaussian density function by matching the first-order and second-order moments. 
\begin{Lem}\label{Mean_variance_lemma}
    (\cite{Mean_Covariance_Results}) Let $\mathbf{z}$ be a Gaussian variable, and the mean value and covariance are $\bm{\mu}$ and $\bm{\Sigma}$, respectively. $\mathbf{A}$ is a symmetric matrix. Then the expectation and variance of the quadratic form $\mathbf{z}^{\top}\mathbf{A}\mathbf{z}$ are given as follows.
    \begin{equation}
        \begin{split}
            \mathbb{E}[\mathbf{z}^{\top}\mathbf{A}\mathbf{z}]&=\mathrm{Tr}(\mathbf{A}\bm{\Sigma})+\bm{\mu}^{\top}\mathbf{A}\bm{\mu}, \\ \mathbf{Var}[\mathbf{z}^{\top}\mathbf{A}\mathbf{z}]&= 2 \operatorname{Tr}(\mathbf{A}\bm{\Sigma} \mathbf{A}\bm{\Sigma})+4 \bm{\mu}^{\top} \mathbf{A} \bm{\Sigma} \mathbf{A} \bm{\mu},
        \end{split}
    \end{equation}
where $\mathbb{E}(\cdot)$ and $\mathbf{Var}(\cdot)$ denote the operations for computing the expectation and variance of a variable, respectively.
\end{Lem}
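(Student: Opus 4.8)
The plan is to eliminate the mean by the shift $\mathbf{z}=\bm{\mu}+\mathbf{w}$ with $\mathbf{w}\sim\mathcal{N}(\mathbf{0},\bm{\Sigma})$, and to expand, using $\mathbf{A}=\mathbf{A}^{\top}$,
\[
\mathbf{z}^{\top}\mathbf{A}\mathbf{z}=\bm{\mu}^{\top}\mathbf{A}\bm{\mu}+\underbrace{2\bm{\mu}^{\top}\mathbf{A}\mathbf{w}}_{\ell(\mathbf{w})}+\underbrace{\mathbf{w}^{\top}\mathbf{A}\mathbf{w}}_{q(\mathbf{w})},
\]
so that $\bm{\mu}^{\top}\mathbf{A}\bm{\mu}$ is deterministic, $\ell$ is linear in $\mathbf{w}$, and $q$ is quadratic. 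Everything then reduces to the first two moments of $\ell$ and $q$. For the expectation I would take expectations termwise: $\mathbb{E}[\ell(\mathbf{w})]=2\bm{\mu}^{\top}\mathbf{A}\,\mathbb{E}[\mathbf{w}]=0$, while $\mathbb{E}[q(\mathbf{w})]=\mathbb{E}[\mathrm{Tr}(\mathbf{A}\mathbf{w}\mathbf{w}^{\top})]=\mathrm{Tr}(\mathbf{A}\,\mathbb{E}[\mathbf{w}\mathbf{w}^{\top}])=\mathrm{Tr}(\mathbf{A}\bm{\Sigma})$ by linearity and the cyclic property of the trace; adding back the constant gives the first identity.

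For the variance I would use $\mathbf{Var}[\mathbf{z}^{\top}\mathbf{A}\mathbf{z}]=\mathbf{Var}[\ell]+\mathbf{Var}[q]+2\,\mathbf{Cov}[\ell,q]$. The cross term vanishes because $\mathbf{Cov}[\ell,q]=\mathbb{E}[\ell q]$ (since $\mathbb{E}[\ell]=0$) and $\ell q$ is an odd-degree polynomial in the symmetric zero-mean Gaussian $\mathbf{w}$, hence has zero mean. The linear contribution is standard: $\mathbf{Var}[\ell]=(2\mathbf{A}\bm{\mu})^{\top}\bm{\Sigma}(2\mathbf{A}\bm{\mu})=4\bm{\mu}^{\top}\mathbf{A}\bm{\Sigma}\mathbf{A}\bm{\mu}$. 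The only substantive computation is $\mathbf{Var}[q]=\mathbb{E}[(\mathbf{w}^{\top}\mathbf{A}\mathbf{w})^{2}]-(\mathrm{Tr}(\mathbf{A}\bm{\Sigma}))^{2}$: I would expand $(\mathbf{w}^{\top}\mathbf{A}\mathbf{w})^{2}=\sum_{i,j,k,l}A_{ij}A_{kl}w_{i}w_{j}w_{k}w_{l}$ and apply Isserlis' (Wick's) theorem $\mathbb{E}[w_{i}w_{j}w_{k}w_{l}]=\Sigma_{ij}\Sigma_{kl}+\Sigma_{ik}\Sigma_{jl}+\Sigma_{il}\Sigma_{jk}$. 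The first pairing sums to $(\mathrm{Tr}(\mathbf{A}\bm{\Sigma}))^{2}$ and cancels, and the other two each contract — using symmetry of $\mathbf{A}$ and $\bm{\Sigma}$ — to $\mathrm{Tr}(\mathbf{A}\bm{\Sigma}\mathbf{A}\bm{\Sigma})$, so $\mathbf{Var}[q]=2\,\mathrm{Tr}(\mathbf{A}\bm{\Sigma}\mathbf{A}\bm{\Sigma})$. Summing the three pieces yields the claimed variance.

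The main obstacle is purely bookkeeping: keeping the index contractions in the fourth-moment step straight and invoking the symmetry of $\mathbf{A}$ and $\bm{\Sigma}$ correctly. If one prefers to avoid Isserlis' theorem, an equivalent route is to whiten via $\mathbf{w}=\bm{\Sigma}^{1/2}\mathbf{y}$, $\mathbf{y}\sim\mathcal{N}(\mathbf{0},\mathbf{I})$, and then orthogonally diagonalize $\bm{\Sigma}^{1/2}\mathbf{A}\bm{\Sigma}^{1/2}$, whereby $q$ becomes a weighted sum of independent $\chi^{2}_{1}$ variables whose mean and variance are elementary and reassemble into the same traces; this needs $\bm{\Sigma}\succ\mathbf{0}$, with the general positive-semidefinite case recovered by a continuity argument.
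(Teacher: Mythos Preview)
Your proof is correct and complete. The paper itself does not prove this lemma at all: it is stated as a cited result from the literature (the reference \cite{Mean_Covariance_Results}) and used as a black box in the subsequent corollary, so there is no ``paper's own proof'' to compare against. Your mean-shift decomposition together with the trace trick for $\mathbb{E}[\mathbf{w}^{\top}\mathbf{A}\mathbf{w}]$, the vanishing of odd Gaussian moments for the cross term, and Isserlis' theorem for the fourth-moment contraction is the standard derivation and all the bookkeeping checks out.
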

\begin{Cor}
Consider the model~\eqref{Affine_Control_System} and~\eqref{Obstacle_Traj}. We approximate $\mathrm{CBC}_{k+i|k}$ using a Gaussian density, where the parameters, i.e., its expectation and variance, are given as follows. 
\begin{equation}\label{Parameters_CBF_Compact}
    \begin{split} \mathbb{E}\left[\mathrm{CBC}_{k+i|k}\right]&=\mathbf{u}_{k+i|k}^{\top}\bm{\Phi}(\mathbf{x}_{k+i|k})\mathbf{u}_{k+i|k}\\
    &\quad+2\mathbf{m}^{\top}(\mathbf{x}_{k+i|k})\mathbf{u}_{k+i|k}+s(\mathbf{x}_{k+i|k})\\
\mathbf{Var}\left[\mathrm{CBC}_{k+i|k}\right]&=\mathbf{u}_{k+i|k}^{\top}\mathbf{H}(\mathbf{x}_{k+i|k})\mathbf{u}_{k+i|k}\\
&\quad+2\mathbf{n}(\mathbf{x}_{k+i|k})^{\top}\mathbf{u}_{k+i|k}+d(\mathbf{x}_{k+i|k}),
    \end{split}
\end{equation}
where 
\begin{equation*}\label{Parameters_CBF_matrices1}
    \begin{split}  
    \bm{\Phi}(\mathbf{x}_{k+i|k})&=\mathbf{g}(\mathbf{x}_{k+i|k})^{\top}\mathbf{W}\mathbf{g}(\mathbf{x}_{k+i|k})\\
    \mathbf{H}(\mathbf{x}_{k+i|k})&=4\sigma^{2}\mathbf{g}(\mathbf{x}_{k+i|k})^{\top}\mathbf{W}^{\top}\mathbf{W}\mathbf{g}(\mathbf{x}_{k+i|k})\\
    \mathbf{m}(\mathbf{x}_{k+i|k})&=\mathbf{g}(\mathbf{x}_{k+i|k})^{\top}\mathbf{W}(\mathbf{f}(\mathbf{x}_{k+i|k})-\bm{\xi}(\mathbf{o}_{k+i|k}))\\
    \mathbf{n}(\mathbf{x}_{k+i|k})&=4\sigma^{2}\mathbf{g}(\mathbf{x}_{k+i|k})^{\top}\mathbf{W}^{\top}\mathbf{W}(\mathbf{f}(\mathbf{x}_{k+i|k})-\bm{\xi}(\mathbf{o}_{k+i|k}))\\
    s(\mathbf{x}_{k+i|k})&=\|\mathbf{f}(\mathbf{x}_{k+i|k})-\bm{\xi}(\mathbf{o}_{k+i|k})\|_{\mathbf{W}}^{2}+\sigma^{2}\mathrm{tr}(\mathbf{W})\\
    &-(1-\gamma)h\left(\mathbf{x}_{k+i \mid k}, \mathbf{o}_{k+i \mid k}\right)-1\\
    d(\mathbf{x}_{k+i|k})&=4\|\sigma\mathbf{W}(\mathbf{f}(\mathbf{x}_{k+i|k})-\bm{\xi}(\mathbf{o}_{k+i|k}))\|^{2}\\
    &\quad+2\sigma^{4}\mathrm{tr}(\mathbf{W}^{\top}\mathbf{W}).
    \end{split}
\end{equation*}
\end{Cor}
\begin{proof}
Combining the results in~\eqref{CBC}, ~\eqref{Ze}, ~\eqref{Obstacle_Traj}, \eqref{hyperellipsoid}, we have
\begin{equation*}
\begin{split}
\mathrm{CBC}_{k+i|k}=&\|\mathbf{f}(\mathbf{x}_{k+i|k})+\mathbf{g}(\mathbf{x}_{k+i|k})\mathbf{u}_{k+i|k}-\bm{\xi}(\mathbf{o}_{k+i|k})\|_{\mathbf{W}}^{2}\\
&+\bm{\omega}_{k+i|k}^{\top}\mathbf{W}\bm{\omega}_{k+i|k}+2(\mathbf{f}(\mathbf{x}_{k+i|k})\\
&+\mathbf{g}(\mathbf{x}_{k+i|k})\mathbf{u}_{k+i|k}-\bm{\xi}(\mathbf{o}_{k+i|k}))^{\top}\mathbf{W}\bm{\omega}_{k+i|k}\\
&-(1-\gamma)h\left(\mathbf{x}_{k+i \mid k}, \mathbf{o}_{k+i \mid k}\right)-1
\end{split}
\end{equation*}
Next, with the use of Lemma~\ref{Mean_variance_lemma}, we obtain the following result. 
\begin{small}
\begin{equation}\label{Parameters_CBF}
    \begin{split}
    &\mathbb{E}\left[\mathrm{CBC}_{k+i|k}\right]=\|\mathbf{f}(\mathbf{x}_{k+i|k})+\mathbf{g}(\mathbf{x}_{k+i|k})\mathbf{u}_{k+i|k}-\bm{\xi}(\mathbf{o}_{k+i|k})\|_{\mathbf{W}}^{2}\\
    &\quad+\sigma^{2}\mathrm{tr}(\mathbf{W})-(1-\gamma)h\left(\mathbf{x}_{k+i \mid k}, \mathbf{o}_{k+i \mid k}\right)\big)-1\\
   &\mathbf{Var}\left[\mathrm{CBC}_{k+i|k}\right]
   % =\mathbb{E}\left[\left(\mathrm{CBC}_{k+i|k}-\mathbb{E}\left[\mathrm{CBC}_{k+i|k}\right]\right)^{2}\right]\\
   =4\|\sigma\mathbf{W}(\mathbf{f}(\mathbf{x}_{k+i|k})+\mathbf{g}(\mathbf{x}_{k+i|k})\mathbf{u}_{k+i|k}\\
   &\qquad-\bm{\xi}(\mathbf{o}_{k+i|k}))\|^{2}+2\sigma^{4}\mathrm{tr}(\mathbf{W}^{\top}\mathbf{W}).
    \end{split}
\end{equation}
\end{small}
\noindent
Finally, the equalities of ~\eqref{Parameters_CBF_Compact} are obtained by expanding the results regarding $\mathbf{u}_{k+i|k}$ as the variable.
\end{proof}
% We linearize the dynamic model~\eqref{Affine_Control_System} around $\bar{\mathbf{x}}_{k}$, which gives 
% \begin{equation}\label{Affine_Control_System}
% 	    \mathbf{x}_{k+1}=\mathbf{A}(\mathbf{x}_{k}-\bar{\mathbf{x}}_{k})+\mathbf{B}(\mathbf{x}_{k}-\bar{\mathbf{x}}_{k})\mathbf{u}_{k},
% \end{equation}
% $1-\|\mathbf{f}(\mathbf{x}_{k+i|k})+\mathbf{u}_{k+i|k}-\bm{\xi}(\mathbf{o}_{k+i|k})\|_{\mathbf{W}}^{2}-\sigma^{2}\mathrm{tr}(\mathbf{W})\\
%    &\quad\,\,-(1-\gamma)h\left(\mathbf{x}_{k+i \mid k}, \mathbf{o}_{k+i \mid k}\right)\big)$ and $4\|\sigma\mathbf{W}(\mathbf{f}(\mathbf{x}_{k+i|k})+\mathbf{g}(\mathbf{x}_{k+i|k})\mathbf{u}_{k+i|k}\\
%    &\qquad-\bm{\xi}(\mathbf{o}_{k+i|k}))\|^{2}+2\sigma^{4}\mathrm{tr}(\mathbf{W}^{\top}\mathbf{W})$. 
% Note that three properties are used to obtain the results of $\mathbf{Var}\left[\mathrm{CBC}_{k+i|k}\right]$, and they are i) the odd moments of multivariate Gaussian distribution are zero; ii) the trace of the matrices (with suitable dimensions) is invariant under circular permutations; and iii) the expected and variance values of quadratic forms with random vectors.
% \begin{enumerate}
%     \item 
% \end{enumerate}
\subsection{Linear Chance Constraints}

\begin{Lem}
    (\cite{Chance_constrained_lemma}) Given any vector $\mathbf{a}$ and scalar $b$, for a multivariate random variable $\mathbf{z}\in\mathcal{N}(\bm{\mu},\bm{\Sigma})$, then the linear chance constraint 
 \begin{equation}
        \begin{split} \mathbb{P}\left(\mathbf{a}^{\top}\mathbf{z}\leq b\right)\leq\lambda
        \end{split}
    \end{equation}
is equivalent to a deterministic constraint
    \begin{equation}
        \begin{split}
            \mathbf{a}^{\top}\bm{\mu}-b\geq c,
        \end{split}
    \end{equation}  
    where $c=\mathrm{erf}^{-1}(1-2\lambda)\sqrt{2\mathbf{a}^{\top}\bm{\Sigma}\mathbf{a}}$,  $\mathrm{erf}$ is the standard error function and is defined as $\mathrm{erf}(x)=\frac{2}{\sqrt{\pi}}\int\nolimits_{0}^{x}e^{-t^{2}}dt$, and $\lambda>0$ is a use-defined probability.    
    % \begin{equation}
    %     \begin{split} \mathbb{P}\left(\mathbf{A}^{\top}\mathbf{z}\leq b\right)=\frac{1}{2}+\frac{1}{2}\mathrm{erf}\left(\frac{b-\mathbf{a}^{\top}\mathbf{x}}{\sqrt{2\mathbf{a}^{\top}\bm{\Sigma}\mathbf{a}}}\right).
    %     \end{split}
    % \end{equation}
\end{Lem}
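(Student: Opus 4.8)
The plan is to reduce the multivariate chance constraint to a one-dimensional statement about a scalar Gaussian and then invert the resulting CDF condition. First I would observe that, since $\mathbf{z}\sim\mathcal{N}(\bm{\mu},\bm{\Sigma})$ and $\mathbf{a}$ is a fixed vector, the scalar $\mathbf{a}^{\top}\mathbf{z}$ is itself Gaussian with mean $\mathbf{a}^{\top}\bm{\mu}$ and variance $\mathbf{a}^{\top}\bm{\Sigma}\mathbf{a}$; I will assume $\mathbf{a}^{\top}\bm{\Sigma}\mathbf{a}>0$ (the degenerate case $\mathbf{a}=\mathbf{0}$ or $\mathbf{a}\in\ker\bm{\Sigma}$ is handled separately, see below). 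Standardizing gives $Y:=(\mathbf{a}^{\top}\mathbf{z}-\mathbf{a}^{\top}\bm{\mu})/\sqrt{\mathbf{a}^{\top}\bm{\Sigma}\mathbf{a}}\sim\mathcal{N}(0,1)$, so that $\mathbb{P}(\mathbf{a}^{\top}\mathbf{z}\leq b)=\Phi\big((b-\mathbf{a}^{\top}\bm{\mu})/\sqrt{\mathbf{a}^{\top}\bm{\Sigma}\mathbf{a}}\big)$, where $\Phi$ denotes the standard normal CDF.

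Next I would use the strict monotonicity (hence invertibility) of $\Phi$ to rewrite the chance constraint $\mathbb{P}(\mathbf{a}^{\top}\mathbf{z}\leq b)\leq\lambda$ as the equivalent deterministic inequality $(b-\mathbf{a}^{\top}\bm{\mu})/\sqrt{\mathbf{a}^{\top}\bm{\Sigma}\mathbf{a}}\leq\Phi^{-1}(\lambda)$, i.e. $\mathbf{a}^{\top}\bm{\mu}-b\geq-\Phi^{-1}(\lambda)\sqrt{\mathbf{a}^{\top}\bm{\Sigma}\mathbf{a}}$. It then remains only to re-express $\Phi^{-1}$ in terms of $\mathrm{erf}^{-1}$: from the identity $\Phi(x)=\tfrac12\big(1+\mathrm{erf}(x/\sqrt{2})\big)$ one obtains $\Phi^{-1}(\lambda)=\sqrt{2}\,\mathrm{erf}^{-1}(2\lambda-1)$, and since $\mathrm{erf}^{-1}$ is odd this equals $-\sqrt{2}\,\mathrm{erf}^{-1}(1-2\lambda)$. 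Substituting yields $\mathbf{a}^{\top}\bm{\mu}-b\geq\mathrm{erf}^{-1}(1-2\lambda)\sqrt{2\mathbf{a}^{\top}\bm{\Sigma}\mathbf{a}}=c$, which is the claimed condition; since every step is an equivalence, the two constraints are indeed equivalent.

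The calculation is essentially routine, so the only points requiring genuine care are (i) preserving the direction of the inequality when inverting $\Phi$, which rests on $\Phi$ being increasing, and (ii) getting the sign right in the $\Phi^{-1}$-to-$\mathrm{erf}^{-1}$ conversion, where the oddness of $\mathrm{erf}^{-1}$ is what turns the minus sign into the argument $1-2\lambda$. I would also add one sentence on the nondegeneracy assumption $\mathbf{a}^{\top}\bm{\Sigma}\mathbf{a}>0$ used to divide by the standard deviation: if it fails, $\mathbf{a}^{\top}\mathbf{z}=\mathbf{a}^{\top}\bm{\mu}$ almost surely and the chance constraint collapses to the deterministic condition $\mathbf{a}^{\top}\bm{\mu}>b$, which is consistent with letting the variance tend to zero in $c$.
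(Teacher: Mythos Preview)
Your derivation is correct and is exactly the standard argument for this classical fact. Note, however, that the paper does not actually supply a proof of this lemma: it is stated as a cited result from \cite{Chance_constrained_lemma}, so there is no ``paper's own proof'' to compare against; your write-up would serve well as the missing justification.
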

Note that the error function and its inverse corresponding to the confidence level can be obtained through a look-up table or series approximation techniques.
% \begin{Lem}
%     Consider the dynamical model given in~\eqref{Affine_Control_System} and~\eqref{eq:target_motion}. Let the obstacle state's measurement be given by~\eqref{Obstacle_Traj} and assume a confidence level $\delta\in(0,1)$. The admissible control space $\mathcal{S}_{\mathbf{u},\delta}(\mathbf{x})$ ensures a chance-constrained safety condition in .
%     \begin{equation}
%     \begin{split}
%    \mathcal{S}_{\mathbf{u},\delta}(\mathbf{x})=\{\mathbf{u}\in\mathcal{U}: \mathbf{A}\mathbf{u}\geq\mathbf{b}\}
%     \end{split}
% \end{equation}
% \end{Lem}
% \begin{proof}
% See Appendix.
% \end{proof}
% Substituting the parameters of the chance constraint~\eqref{Standard_Guassian}, i.e., $\mathbb{E}\left[\mathrm{CBC}_{k+i|k}\right]$ and $\mathbf{Var}\left[\mathrm{CBC}_{k+i|k}\right]$, and using Lemma 1, it can be further written as
To ensure that the collision probability is below a certain threshold $1-\delta$, we derive the constraint $\mathbb{P}(\mathrm{CBC}_{k+i|k}<\zeta|\mathbf{x}_{k+i|k},\mathbf{o}_{k+i|k},\mathbf{u}_{k+i|k})< 1-\delta$ from the expression in~\eqref{Standard_Guassian}. By setting $\mathbf{a}=1$, $\mathbf{z}=\mathrm{CBC}_{k+i|k}$, $b=\zeta$, and $\lambda=1-\delta$ in Lemma 1, we obtain the chance constraint~\eqref{Standard_Guassian} into the following form.
\begin{equation}\label{Deterministic_Condition}
    \begin{split}
   \mathbb{E}\left[\mathrm{CBC}_{k+i|k}\right]-\zeta\geq c(\delta)\sqrt{\mathbf{Var}\left[\mathrm{CBC}_{k+i|k}\right]},
    \end{split}
\end{equation}
where $c(\delta)=\sqrt{2}\mathrm{erf}^{-1}(2\delta-1)$. Substituting~\eqref{Parameters_CBF_Compact} into~\eqref{Deterministic_Condition}  gives the following result
\begin{small}
\begin{equation}\label{Re_organize}
    \begin{split}
&c(\delta)\sqrt{\mathbf{u}_{k+i|k}^{\top}\mathbf{H}(\mathbf{x}_{k+i|k})\mathbf{u}_{k+i|k}+2\mathbf{n}(\mathbf{x}_{k+i|k})\mathbf{u}_{k+i|k}+d(\mathbf{x}_{k+i|k})}\\
&-(\mathbf{u}_{k+i|k}^{\top}\bm{\Phi}(\mathbf{x}_{k+i|k})\mathbf{u}_{k+i|k}+2\mathbf{m}(\mathbf{x}_{k+i|k})\mathbf{u}_{k+i|k}+s(\mathbf{x}_{k+i|k}))\\
&\leq -\zeta.
    \end{split}
\end{equation}
\end{small}
\begin{Rmk}\label{CC_Special}
    If the measurement given in~\eqref{Obstacle_Traj} is noise-free (i.e., $\sigma=0$), the variance obtained in~\eqref{Parameters_CBF_Compact} will yield $\mathbb{E}\left[\mathrm{CBC}_{k+i|k}\right]=\|\mathbf{f}(\mathbf{x}_{k+i|k})+\mathbf{g}(\mathbf{x}_{k+i|k})\mathbf{u}_{k+i|k}-\bm{\xi}(\mathbf{o}_{k+i|k})\|_{\mathbf{W}}^{2}-(1-\gamma)h\left(\mathbf{x}_{k+i \mid k}, \mathbf{o}_{k+i \mid k}\right)-1$ and $\mathbf{Var}\left[\mathrm{CBC}_{k+i|k}\right]=0$. Consequently, the inequality constraint in~\eqref{Re_organize} can be reduced to a deterministic constraint, i.e.,~\eqref{Ze}. Furthermore, when the prediction horizon is chosen as $N=1$, the inequality condition~\eqref{Re_organize}, presented below, becomes convex.
    \begin{equation}\label{One_step}
    \begin{split}
&(\mathbf{u}_{k}^{\top}\bm{\Phi}(\mathbf{x}_{k})\mathbf{u}_{k}+2\mathbf{m}(\mathbf{x}_{k})\mathbf{u}_{k}+s(\mathbf{x}_{k}))\\
&\quad-c(\delta)\sqrt{\mathbf{u}_{k}^{\top}\mathbf{H}(\mathbf{x}_{k})\mathbf{u}_{k}+2\mathbf{n}(\mathbf{x}_{k})\mathbf{u}_{k}+d(\mathbf{x}_{k})}\geq \zeta.
    \end{split}
\end{equation}
This is due to that the state-dependent matrices, which include $\bm{\Phi}(\mathbf{x}_{k})$, $\mathbf{H}(\mathbf{x}_{k})$, $\mathbf{m}(\mathbf{x}_{k})$, $\mathbf{n}(\mathbf{x}_{k})$, $d(\mathbf{x}_{k})$, and $s(\mathbf{x}_{k})$, are constant at each time step $k$, given that $\mathbf{x}_{k}=\mathbf{x}_{k|k}$ can be measured or obtained at time $k$. Note that $\bm{\Phi}(\mathbf{x}_{k})$ and $\mathbf{H}(\mathbf{x}_{k})$ are both positive definite if $\mathbf{g}(\mathbf{x}_{k})$ is row full rank and $\mathbf{W}$ is a full rank matrix. 
\end{Rmk}
\subsection{Chance-Constrained MPC-CBF}
For the MPC-CBF formulation given in~\eqref{MPC_CBF}, we usually set the terminal cost to be $p\left(\mathbf{x}_{k+N \mid k}\right)=\|\mathbf{x}_{k+N \mid k}\|_{\mathbf{P}}^{2}$ and the stage cost as $q\left(\mathbf{x}_{k+i \mid k},\mathbf{u}_{k+i \mid k}\right)=\|\mathbf{x}_{k+i \mid k}\|_{\mathbf{Q}}^{2}+\|\mathbf{u}_{k+i \mid k}\|_\mathbf{R}^{2}$, where $\mathbf{P}$, $\mathbf{Q}$, and $\mathbf{R}$ are positive definite weight matrices for the terminal states, stage states, and control inputs. Then the CC-MPC-CBF is formulated as follows.
\begin{equation}\label{MPC_CBF_Problistic}
\begin{split}
&\min_{\mathbf{u}_{k: k+N-1 \mid k}} p\left(\mathbf{x}_{k+N \mid k}\right)+\sum\limits_{i=1}^{N-1}q\left(\mathbf{x}_{k+i \mid k},\mathbf{u}_{k+i \mid k}\right)\\
&\quad\text { s.t. } \eqref{Zb}, \eqref{Zc}, \eqref{Zd}, \eqref{Re_organize}.
\end{split}
\end{equation}
It is important to note that this framework can also be adapted for a reference tracking example with slight modifications. To achieve this, we simply adjust the terminal cost to be $p\left(\mathbf{x}_{k+N \mid k}\right)=\|\mathbf{x}_{k+N \mid k}-\mathbf{x}_{k+N \mid k}^{\mathrm{ref}}\|_{\mathbf{P}}^{2}$, and the stage cost becomes $q\left(\mathbf{x}_{k+i \mid k},\mathbf{u}_{k+i \mid k}\right)=\|\mathbf{x}_{k+i \mid k}-\mathbf{x}_{k+i \mid k}^{\mathrm{ref}}\|_{\mathbf{Q}}^{2}+\|\mathbf{u}_{k+i \mid k}-\mathbf{u}_{k+i \mid k}^{\mathrm{ref}}\|_\mathbf{R}^{2}$, where the superscript ``$\mathrm{ref}$" denotes the reference trajectory. These adjustments enable the framework to handle reference tracking scenarios effectively.
% \begin{Asum}\label{Assumption_1}
% We assume that state set $\mathcal{X}$, admissible control input set $\mathcal{U}$, and terminal set $\mathcal{X}_{f}$ are all convex sets. 
% \end{Asum}

There are several off-the-shelf solvers, such as IPOPT, MOSEK, and CPLEX, that can be used to solve the non-convex optimization problem~\eqref{MPC_CBF_Problistic}. Additionally, it is important to note that the terminal cost can be treated as a control Lyapunov function (CLF) and formulated as a constraint, similar to~\cite{CC-MPC-CLF-CBF}. Then the formulation in~\eqref{MPC_CBF_Problistic} can be viewed as a chance-constrained MPC-CLF-CBF.
% \begin{Rmk}\label{Rem_Maintain}
%     The nice properties of deterministic MPC-CBF, such as avoiding obstacles at an early stage, are maintained in a stochastic scenario because the chance-constrained approach still considers the CBF constraints in the optimization problem. The chance constraint only adds an additional constraint on the probability of violating the CBF constraint. 
% \end{Rmk}
\begin{Rmk}\label{Drawbacks}
% There is one major limitation associated with the CC-MPC-CBF approach: feasibility, which arises from the requirement that each predicted state must lie within the domain of the CBF. This condition can lead to an empty intersection set of the reachable set $\mathcal{R}_{k}=\{\mathbf{x}_{k+N|k}\in\mathbb{R}^{n}: \eqref{Zb}, \eqref{Zc},\eqref{Zd}\}$ and safety set defined by $\mathcal{S}_{k}=\{\mathbf{x}_{k+N|k}\in\mathbb{R}^{n}:\eqref{Re_organize}\}$, making the optimization problem infeasible. 
The major limitation associated with the CC-MPC-CBF approach is feasibility. Define the reachable set $\mathcal{R}_{k+i|k}=\{\mathbf{x}_{k+i|k}\in\mathbb{R}^{n}: \eqref{Zb}, \eqref{Zc},\eqref{Zd}\}, i=0,\cdots, N$ and safety set $\mathcal{S}_{k+i \mid k}=\left\{\mathbf{x}_{k+i \mid k} \in \mathbb{R}^n:~\eqref{Zb}, \mathbf{x}_{k \mid k}=\mathbf{x}_k,(16), \mathbf{u}_{k+i \mid k} \in \mathcal{U}\right\}, i=0, \cdots, N$. The CC-MPC-CBF~\eqref{MPC_CBF_Problistic} requires the intersection set $\mathcal{R}_{k+i|k}\cap\mathcal{S}_{k+i|k}$ is not empty, making the optimization problem possibly infeasible. It is worth highlighting that the motivation behind the definition of $\mathcal{R}_{k+i \mid k}$ and $\mathcal{S}_{k+i \mid k}$ is to achieve sequential implementation in Section IV.
\end{Rmk}
\begin{Prop}\label{Robustness_to_stochasity}
Let $\delta\geq\frac{1+\mathrm{erf}(0.5)}{2}$, and suppose there exists a value of noise $\sigma$ that guarantees the feasibility of~\eqref{MPC_CBF_Problistic} with a probability $\Gamma$. For any real system, if the standard deviation of the noise $\sigma_{\mathrm{upd}}>\sigma$, the CC-MPC-CBF~\eqref{MPC_CBF_Problistic} will encounter feasible cases with a probability $\Gamma_{\mathrm{upd}}\leq\Gamma$.
\end{Prop}
\begin{proof}
The expectation and variance of $\mathrm{CBC}_{k+i|k}$ are given in \eqref{Parameters_CBF}, and $\mathbf{Var}\left[\mathrm{CBC}_{k+i|k}\right]\geq 2\sigma^{4}\mathrm{tr}(\mathbf{W}^{\top}\mathbf{W})$. Meantime, we recall the trace inequality $\sqrt{\mathrm{tr}(\mathbf{W}^{\top}\mathbf{W})}\leq \mathrm{tr}(\mathbf{W})$ with $\mathbf{W}$ being a symmetric positive definite matrix. By substituting~\eqref{Parameters_CBF} and the two inequality conditions into~\eqref{Deterministic_Condition}, it gives
\begin{equation}\label{Prob}
    \begin{split}
    &\mathbb{E}\left[\mathrm{CBC}_{k+i|k}\right]-c(\delta)\sqrt{\mathbf{Var}\left[\mathrm{CBC}_{k+i|k}\right]}-\zeta\\
    &\leq \mathbf{D}+\sigma^{2}\mathrm{tr}(\mathbf{W})-c(\delta)\sqrt{2\sigma^{4}\mathrm{tr}(\mathbf{W}^{\top}\mathbf{W})}-\zeta\\
    &\leq\mathbf{D}-\zeta+\left(1-\sqrt{2}c(\delta)\right)\sqrt{\mathrm{tr}(\mathbf{W}^{\top}\mathbf{W})}\sigma^{2},
    \end{split}
\end{equation}
where $\mathbf{D}=\|\mathbf{f}(\mathbf{x}_{k+i|k})+\mathbf{g}(\mathbf{x}_{k+i|k})\mathbf{u}_{k+i|k}-\bm{\xi}(\mathbf{o}_{k+i|k})\|_{\mathbf{W}}^{2}-(1-\gamma)h\left(\mathbf{x}_{k+i \mid k}, \mathbf{o}_{k+i \mid k}\right)\big)-1$ is deterministic if $\mathbf{x}_{k+i|k}$, $\mathbf{o}_{k+i|k}$, and $\mathbf{u}_{k+i|k}$ are given. Next, given that $\delta\geq\frac{1+\mathrm{erf}(0.5)}{2}$, we can infer that $1-\sqrt{2}c(\delta)\leq 0$. Meantime, according to the assumption that $\sigma$ is the value able to guarantee the satisfaction of the constraints~\eqref{MPC_CBF_Problistic} with a probability $\Gamma$, then for the case that $\sigma_{\mathrm{upd}}>\sigma$, it will lead to a decrease in the safety margin defined by the inequality~\eqref{Deterministic_Condition}. Consequently, the size of the safety set $\mathcal{S}_{k+i|k}$ decreases, which leads to a reduction in the size of the feasible set $\mathcal{R}_{k+i|k}\cap\mathcal{S}_{k+i|k}$. Therefore, the probability of the constrained MPC in \eqref{MPC_CBF_Problistic} being feasible also decreases, which infers $\Gamma_{\mathrm{upd}}\leq\Gamma$.
% \begin{equation*}
%     \begin{split}
% \sigma^{2}\mathrm{tr}(\mathbf{W})-\sqrt{2}c(\delta)\sigma^{2}\mathrm{tr}(\mathbf{W})=(1-2\mathrm{erf}^{-1}(2\delta-1))\mathrm{tr}(\mathbf{W})
%     \end{split}
% \end{equation*}
\end{proof}
Proposition~\ref{Robustness_to_stochasity} highlights that the feasibility of~\eqref{MPC_CBF_Problistic} depends on the specified collision avoidance probability $\delta$. This is because the trade-off between robustness to stochasticity and feasibility cannot be simultaneously achieved in~\eqref{MPC_CBF_Problistic}. For instance, when the collision avoidance probability $\delta$ is set to be high and the noise is highly stochastic, a conservative control strategy may be required to ensure safety, which may render~\eqref{MPC_CBF_Problistic} infeasible. Conversely, when a small value of $\delta$ is chosen, allowing for some level of failure in collision avoidance, the optimization problem~\eqref{MPC_CBF_Problistic} remains feasible even in the presence of high noise variability.   
\begin{Rmk}\label{Hyperparameter_Gamma}
    The feasibility of \eqref{Re_organize} is influenced by the parameter $\gamma$, which has the same impact in both deterministic~\cite{MPC_CBF} and non-deterministic scenarios. As $\gamma$ is decreased, the parameter $s(\mathbf{x}_{k+i|k})$ defined in~\eqref{Parameters_CBF_Compact} is reduced, leading to a corresponding decrease in the expectation $\mathbb{E}\left[\mathrm{CBC}_{k+i|k}\right]$ associated with $s(\mathbf{x}_{k+i|k})$. We  rewrite \eqref{Deterministic_Condition} as $c(\delta)\sqrt{\mathbf{Var}\left[\mathrm{CBC}_{k+i|k}\right]}+\zeta \leq \mathbb{E}\left[\mathrm{CBC}_{k+i|k}\right]$, which shows that the upper bound of the inequality, i.e., $\mathbb{E}\left[\mathrm{CBC}_{k+i|k}\right]$, is decreased. Consequently, as $\mathbb{E}\left[\mathrm{CBC}_{k+i|k}\right]$ decreases, the safety set $\mathcal{S}_{k+i|k}$ becomes smaller, resulting in a smaller feasible set $\mathcal{R}_{k+i|k} \cap \mathcal{S}_{k+i|k}$. Therefore, the probability of \eqref{MPC_CBF_Problistic} being feasible will be decreased with a decrease of $\gamma$. Moreover, when $\gamma=1$, the chance-constrained CBF \eqref{Standard_Guassian} reduces to the form $\mathbb{P}(h(\mathbf{x}_{k+i|k},\mathbf{o}_{k+i|k})\geq\zeta|\mathbf{x}_{k+i|k},\mathbf{o}_{k+i|k},\mathbf{u}_{k+i|k})\geq\delta$, which is recognized as a CC-MPC with distance constraints (CC-MPC-DC)~\cite{Chance_Constraied}.
\end{Rmk}
% \begin{Rmk}
% When we are considering different levels of noise in the measurement of obstacles, we should notice that larger $\sigma^{2}$ will lead to a larger chance of infeasibility. As 
% \end{Rmk}
\begin{Rmk}\label{Hyper_N}
    Increasing the prediction horizon will raise the probability of infeasibility. This is because the introduction of more constraints, resulting from an increase in the value of $N$, will cause both sets $\mathcal{R}_{k+i|k}, i=0,\cdots, N$ and $\mathcal{S}_{k+i|k}$ to contract. It may result in an empty set of $\mathcal{R}_{k+i|k} \cap \mathcal{S}_{k+i|k}$.  
\end{Rmk}

It is worthy to emphasize that the discussion regarding the impact of $\gamma$ and $N$ is specified at a particular time intervals $\{k,\cdots,k+N-1\}$ rather than the entire task duration. When considering a specific task over the entire time interval, one might argue that increasing $\gamma$ or $N$ can result in earlier actions (smaller control inputs), potentially reducing the possibility of encountering infeasibility.
% However, those optimization solvers are often time-consuming when tackling highly non-convex problems, particularly for large prediction horizons.
\section{CC-MPC-CBF with a Sequential Implementation}\label{Sequantial_Approach}
In this section, we present a solution to address the limitation outlined in Remark~\ref{Drawbacks}.  A common strategy in practice is to relax the general state constraints, typically achieved by introducing slack variables to the constraints, as seen in related works~\cite{MPC_Relaxation}.  Specifically, we propose a sequential implementation of the CC-MPC-CBF approach. This involves decomposing \eqref{MPC_CBF_Problistic} into two sub-optimization problems. The first sub-optimization problem utilizes the MPC formulation without CBF constraints, resulting in a standard MPC formulation as follows, where its objective is to provide a nominal control input $\mathbf{u}_{k: k+N-1 \mid k}^{\mathrm{nom}}$ that guarantees system performance, such as closed-loop stability.
\begin{equation}\label{MPC_Standard}
\begin{aligned}
J_k^*\left(\mathbf{x}_k\right)= & \min _{\mathbf{u}_{k k+N-1 \mid k}} J\left(\mathbf{x}_{k \cdot k+N \mid k}, \mathbf{u}_{k \cdot k+N-1 \mid k}\right) \\
& \text { s.t.~\eqref{Zb},~\eqref{Zc},~\eqref{Zd}. }
\end{aligned}
\end{equation}
% However, in this paper, we use a different relaxation technique. 
% In the first sub-optimization problem defined by equation~\eqref{MPC_Standard}, our objective is to synthesize a control input $\mathbf{u}_{\mathrm{nom}}$ that guarantees the satisfaction of the general state constraints. In the second sub-optimization problem, we incorporate $\mathbf{u})_{\mathrm{nom}}$ into the cost function, ignoring the standard state constraints while treating safety constraints as hard constraints.
In the second sub-optimization problem, we ensure safety by incorporating the CBF constraints. It is formulated as follows, and we call it a predictive safety filter.
\begin{equation}\label{CBF}
\begin{split}
&\min_{\mathbf{u}_{k: k+N-1 \mid k}} \|\mathbf{u}_{k\mid k}-\mathbf{u}_{k\mid k}^{\mathrm{nom}}\|^{2}\\
&\quad\text { s.t. } \eqref{Zb}, \mathbf{x}_{k \mid k}=\mathbf{x}_k, \eqref{Re_organize}, \\
&\qquad\quad \mathbf{u}_{k+i|k}\in\mathcal{U}
\end{split}
\end{equation}
We use the off-the-shelf solvers, such as IPOPT, MOSEK, and CPLEX, to solve the above two optimization problems.

The first sub-optimization problem~\eqref{MPC_Standard} has been well-studied in many existing pieces of literature~\cite{MPC_Literature1,MPC_Literature2}, which can be used for stabilization and reference trajectory tracking applications. For the second sub-optimization problem, it is based on the idea of the safety filter, or an active set invariance filter~\cite{Safety_filter}, where the nominal control input is filtered through~\eqref{CBF} with safety guarantees because of~\eqref{Re_organize}. Additionally, it is noteworthy to mention that splitting~\eqref{MPC_CBF_Problistic} into a standard MPC and~\eqref{CBF} could be a natural option for numerous applications. In real-world systems, a standard MPC is formulated without considering safety constraints, making it a preferred scenario to introduce~\eqref{CBF}.
% For instance, in Section~\ref{Simulation}, we presented a reference tracking task that requires avoiding obstacles. In this case, the conventional MPC can be utilized to achieve trajectory tracking, while~\eqref{stop_criterion} guarantees collision avoidance. 
\begin{Rmk}\label{CC_MPC_CBF_Feasibility_Analysis}
The feasible sets for the standard MPC~\eqref{MPC_Standard} and the predictive safety filter~\eqref{CBF} are denoted by $\mathcal{R}_{k+i|k}$ and $\mathcal{S}_{k+i|k}$, respectively. Notably, the feasible set for both sub-optimization problems is larger when compared to the intersection $\mathcal{R}_{k+i|k} \cap \mathcal{S}_{k+i|k}$ of the CC-MPC-CBF~\eqref{MPC_CBF_Problistic}, thereby reducing the possibility of encountering infeasible cases. However, it should be emphasized that, in this formulation, the general state constraint $\mathbf{x}_{k+i \mid k} \in \mathcal{X}$, for $i=1, \ldots, N$, is no longer strictly regarded as a hard constraint, meaning it may not be strictly satisfied. The predictive safety filter modifies the nominal control input generated by MPC, which could potentially sacrifice some degree of optimality to ensure the satisfaction of $\mathbf{x}_{k+i \mid k} \in \mathcal{X}$. Furthermore, while the sequential implementation increased feasibility, it is important to note that the two separate optimization problems can still become infeasible in certain scenarios. To address this feasibility issue, related solutions can be explored in prior works~\cite{MPC_Feasibility} and~\cite{predictive_safe_filter}.
\end{Rmk}
\begin{Rmk}
    In the special case that $N=1$ in~\eqref{CBF}, it is corresponding to the well-known safety filter, as detailed in~\cite{Safety_filter}. However, despite its widespread recognition, this filter comes with a drawback: it sacrifices its ability to make predictions and may cause excessively aggressive behavior. 
    % Additionally, if we set the cost function of~\eqref{CBF} to be $\|\mathbf{u}_{k+i \mid k}-\mathbf{u}_{k+i \mid k}^{\mathrm{nom}}\|_\mathbf{R}$, it results in a standard predictive safety filter as proposed in~\cite{predictive_safe_filter}.    
\end{Rmk}
% % As it is noted that the terminal cost $p\left(\mathbf{x}_{k+N \mid k}\right)$ can be treated as a control Lyapunov function, it can be used to guarantee the stability of the closed-loop system by satisfying mild assumptions in the context of linear systems of 
% % This sequential implementation approach overcomes the limitations of the original CC-MPC-CBF approach and reduces computational complexity while ensuring the feasibility and safety of the system.

\subsection{Iterative Convex Optimization}~\label{Convexify_Result}
As mentioned, the first sub-optimization problem has already been extensively studied, and hence we will not provide any further descriptions in this paper. However, it is interesting to have some discussions on the second sub-optimization problem~\eqref{CBF}.
% \begin{Asum}\label{Assumption_1}
% We assume that the state set $\mathcal{X}$, admissible control input set $\mathcal{U}$, and terminal set $\mathcal{X}_{f}$ are convex sets. 
% % Meanwhile, we assume that~\eqref{Zc} and $\mathbf{x}_{k+N \mid k} \in \mathcal{X}_f$ are always satisfied for each $k$.  
% \end{Asum}  
% Assuming the validity of Assumption~\ref{Assumption_1}, the non-convexity of~\eqref{CBF} is solely attributed to~\eqref{Re_organize}. 

We notice that the optimization problem~\eqref{CBF} is convex if a specific set of $\mathbf{x}_{k:k+N|k}$ is previously given, where the convexity of the chance-constrained constraint has already been discussed in Remark~\ref{CC_Special}.
% \begin{Rmk}
%     Assume that $\mathbf{x}_{k:k+N|k}$ is known, then~\eqref{Re_organize} is convex because $\bm{\Phi}(\mathbf{x}_{k+i|k})$, $\mathbf{H}(\mathbf{x}_{k+i|k})$, $\bm{m}(\mathbf{x}_{k+i|k})$, $\mathbf{n}(\mathbf{x}_{k+i|k})$, $s(\mathbf{x}_{k+i|k})$, and $d(\mathbf{x}_{k+i|k})$ are no more state-dependent, which are consist of constant elements. Note that $\bm{\Phi}(\mathbf{x}_{k+i|k})$ and $\mathbf{H}(\mathbf{x}_{k+i|k})$ are positive definite as defined in~\eqref{Parameters_CBF_Compact}
% \end{Rmk}
Therefore, we suggest using an iterative convex optimization algorithm, as outlined in Algorithm~\ref{alg:Framwork}, to solve ~\eqref{CBF}. The proposed algorithm aims to solve~\eqref{CBF} using a sequence of convex optimization problems in an iterative manner. In particular, the system state $\mathbf{x}_{k:k+N|k}$ of \eqref{CBF} is fixed and updated at each iteration while the control input is treated as the decision variable. By applying a suitable stop criterion, we can obtain a near-optimal solution of \eqref{CBF}. Note that the variable $j$ represents the iteration number, which ranges from 1 to $j_{\mathrm{max}}$, where $j_{\mathrm{max}}$ is the maximum number of allowed iterations. The overall duration of the time interval under consideration is denoted by $k_{\mathrm{max}}$. The stop criterion at each $k$ is given as follows.
\begin{equation}\label{stop_criterion} 
\sum\limits_{i=0}^{N}\|\mathbf{x}_{k+i|k}^{j+1}-\mathbf{x}_{k+i|k}^{j}\|\leq \varepsilon, 
\end{equation}
where $\varepsilon>0$ is constant, which can be obtained after some offline trials. It is worth noting that a weighting matrix can be included in the criterion~\eqref{stop_criterion}. However, for simplicity, we have not taken it into account in this case. Additionally, it should be noted that we make the assumption that the starting state $\mathbf{x}_{0}$ does not encounter any obstacles and that the robot has the capability to reach the state $\mathbf{x}_{k+1|k}^{*}$ (obtained in Step 10 of Algorithm~\ref{alg:Framwork}) before $k+1$. 
\begin{algorithm}[tb]
\caption{Iterative Convex Optimization}
\label{alg:Framwork}
\begin{algorithmic}[1]
\REQUIRE
The system model~\eqref{Affine_Control_System} and obstacle model~\eqref{Obstacle_Traj}, 

$\quad\,\,$ state $\mathbf{x}_{k}$ at each $k, k=0,\cdots, k_{\mathrm{max}}$.
\ENSURE Closed-loop trajectory $\mathbf{x}_{k:k+N|k}^{*}$ at each $k$.
\STATE  Set $k=0$ and initial guess $\mathbf{u}_{0:N-1|0}^{0}=\mathbf{u}_{k: k+N-1 \mid k}^{\mathrm{nom}}$.
\STATE Initialize the variables $\mathbf{x}_{0:N|0}^{0}$ using $\mathbf{x}_{0}$ and $\mathbf{u}_{0:N-1|0}^{0}$ via~\eqref{Zb}.
\STATE \textbf{for} $k<k_{\mathrm{max}}$ \textbf{do}
\STATE $\quad$ Set $j=0$.
\STATE $\quad$ \textbf{While} $j<j_{\mathrm{max}}$ or \eqref{stop_criterion} is not satisfied \textbf{do}
\STATE \qquad \,\,\, Fix $\mathbf{x}_{k:k+N|k}^{j}$ and compute the optimal solution 

\qquad  \,\,\, $\mathbf{u}_{k:k+N-1|k}^{j+1}$ in~\eqref{CBF} through some convex solvers 

\qquad \,\,\, like CVXOPT and GUROBI.

\STATE \qquad \,\,\, \textbf{Update} the variables $\mathbf{x}_{k:k+N|k}^{j+1}$ for the next 

\qquad \,\,\, iteration using $\mathbf{x}_{k}$ and $\mathbf{u}_{k:k+N-1|k}^{j+1}$ via~\eqref{Zb}.
\STATE \qquad \,\,\, $j=j+1$.
\STATE $\quad$ \textbf{end while}
\STATE Extract optimized states $\mathbf{x}_{k:k+N|k}^{*}=\mathbf{x}_{k:k+N|k}^{j}$ and inputs $\mathbf{u}_{k:k+N-1|k}^{*}=\mathbf{u}_{k:k+N-1|k}^{j}$ from the last iteration.
\STATE Update  $\mathbf{u}_{k+1:k+N|k+1}^{0}$ with $\mathbf{u}_{k:k+N-1|k}^{*}$ and obtain the states $\mathbf{x}_{k+1:k+N+1|k+1}^{0}$ for the next time instant with $\mathbf{x}_{k}$ and $\mathbf{u}_{k:k+N-1|k}^{*}$ according to~\eqref{Zb}.
\STATE $k=k+1$.
\STATE \textbf{Return} closed-loop trajectory $\mathbf{x}_{k:k+N|k}^{*}$. 
\STATE \textbf{end for}
\end{algorithmic}
\end{algorithm}
% It is noted that the optimization problem~\eqref{MPC_CBF_Problistic} is non-convex because of~\eqref{Zb} and~\eqref{Re_organize}. However, if the coordinate descent is applied, the optimization problem~\eqref{MPC_CBF_Problistic} is  convex at each time instant because $\mathbf{x}_{k:k+N}^{j}$ are available at each time instant $k$.  

In order to provide a clear explanation of the iterative convex optimization algorithm, we will tour the readers to check some steps of   Algorithm~\ref{alg:Framwork} with detailed explanations. At Step 2 or Step 7 for each iteration, we use $\mathbf{x}_{k}, k=0,\cdots,{k}_{\mathrm{max}}$ and $\mathbf{u}_{k:k+N-1|k}^{0}$ to obtain $\mathbf{x}_{k:k+N|k}^{0}$. If $\mathbf{x}_{k+i|k}^{0}\in\mathcal{S}_{k+i|k}, i=0,\cdots, N$, then $\mathbf{u}_{k:k+N-1|k}^{0}$ would be the solution of~\eqref{CBF}, as the safety requirement is already satisfied with the solution. However, if $\mathbf{x}_{k+i|k}^{0}\notin\mathcal{S}_{k+i|k}$, which means that~\eqref{CBF} cannot be satisfied with $\mathbf{x}_{k:k+N|k}^{0}$ and $\mathbf{u}_{k:k+N-1|k}^{0}$, this leads to Step 6. In Step 6, we obtain $\mathbf{u}_{k:k+N-1|k}^{1}$, which ensures that $\mathbf{x}_{k:k+N|k}^{0}$ with $\mathbf{u}_{k:k+N-1|k}^{1}$ satisfies~\eqref{CBF}. We repeat the above processes until we find a $\mathbf{x}_{k:k+N|k}^{j}\in\mathcal{S}_{k+i|k}$ or until $j=j_{\mathrm{max}}$.
% However, due to the dynamic model~\eqref{Zb}, $\mathbf{x}_{k}$ with $\mathbf{u}_{k:k+N-1|k}^{1}$ will produce an updated system state $\mathbf{x}_{k:k+N|k}^{1}$. 
\begin{Rmk}
Note that the stop criterion~\eqref{stop_criterion} in Algorithm~\ref{alg:Framwork} guarantees a near-optimal solution for~~\eqref{CBF}. This is because at each iteration $j$, solving~\eqref{CBF} ensures that the predicted state $\mathbf{x}_{k:k+N|k}^{j}$ with control input $\mathbf{u}_{k:k+N|k}^{j+1}$ satisfies~\eqref{CBF} (as shown in Step 6 of Algorithm~\ref{alg:Framwork}). The stop criterion~\eqref{stop_criterion} ensures that the difference between the predicted states $\mathbf{x}_{k:k+N|k}^{j+1}$ and $\mathbf{x}_{k:k+N|k}^{j}$ is within a certain tolerance level. Therefore, we claim that $\mathbf{x}_{k:k+N|k}^{j+1}$ is a near-optimal solution to~\eqref{CBF}.
\end{Rmk}
\section{Application Examples and Simulation Results}\label{Simulation}
This section applies the CC-MPC-CBF approach developed in Section~\ref{Main_results} and the sequential implementation approach presented in Section~\ref{Sequantial_Approach} to a double integrator system for MOCA. Through simulations, we highlight the benefits of the proposed approach in terms of its robustness to stochastic noises, fast computation speed for addressing the CC-MPC-CBF optimization problem, and desired feasibility.
\subsection{Dynamical Model of Robot}
We consider a linear discrete-time system modelled by a double integrator 
% \begin{equation}
% \left[\begin{array}{c}
% \mathbf{p}_{k+1} \\
% \mathbf{v}_{k+1}
% \end{array}\right]=\left[\begin{array}{ll}
% \mathbf{1} & \Delta T \\
% \mathbf{0} & \mathbf{0}
% \end{array}\right]\left[\begin{array}{c}
% \mathbf{p}_{k} \\
% \mathbf{v}_{k}
% \end{array}\right]+\left[\begin{array}{l}
% \mathbf{0} \\
% \mathbf{I}
% \end{array}\right] \mathbf{u}_{k}
% \end{equation}
\begin{equation}
\mathbf{x}_{k+1}=\mathbf{A}\mathbf{x}_{k}+\mathbf{B}\mathbf{u}_{k},
\end{equation}
where $\mathbf{x}_{k}=[\mathbf{p}_{k}, \mathbf{v}_{k}]^{\top}$ is the system state, and $\mathbf{p}_{k}$ and $\mathbf{v}_{k}$ denote position and velocity of the robot, respectively. The control input $\mathbf{u}_{k}$ is the acceleration. The system matrices are given by 
\begin{equation*}
    \mathbf{A}=\left[\begin{array}{ll}
\mathbf{1} & \Delta T \\
\mathbf{0} & \mathbf{0}
\end{array}\right]\quad \mathbf{B}=\left[\begin{array}{l}
\mathbf{0} \\
\mathbf{I}
\end{array}\right].
\end{equation*}
Note that $\mathbf{f}(\mathbf{x})=\mathbf{A}\mathbf{x}_{k}$ and $\mathbf{g}(\mathbf{x})=\mathbf{B}$ in this case.
\subsection{Obstacle Model}
We model each obstacle as a non-rotating enclosing spherical. The motion of the obstacle follows:
\begin{equation}
\mathbf{o}_{k+1}=\mathbf{A}\mathbf{o}_{k}+\mathbf{B}\bm{\tau}_{k}+\bm{\omega}_{\mathbf{o}_{k}},
\end{equation}
where the state of the obstacle at time $k$ is characterized by both its position $\mathbf{o}_{\mathbf{p},{k}}$ and velocity $\mathbf{o}_{\mathbf{v},{k}}$, which is denoted as $\mathbf{o}_{k}=[\mathbf{o}_{\mathbf{p},{k}},\mathbf{o}_{\mathbf{v},{k}}]$. $\bm{\tau}_{k}$ is the control input of the obstacle at time $k$.  We assume the  obstacle position at each $k$ is available, while the velocity measurements $\bm{\omega}_{\mathbf{o}_{\mathbf{v}_{k}}}$ are corrupted by a Gaussian noise which follows $\mathcal{N}(\mathbf{0},\sigma^{2}\mathbf{I}_{3\times 3})$.
\subsection{Configurations}\label{Sim_Conf}
\subsubsection{Reference Trajectory and Obstacle Settings}
We consider a desired reference trajectory $\mathbf{r}_{d}=[2\sin(0.4 k), 2\cos(0.4 k), 2]^{\top}$, which is a circle in $\mathrm{X-Y}$ plane centered at the origin. We require the robot, initialized at the position $\mathbf{p}_{0}=[0,0,2]^{\top}$,  to follow $\mathbf{r}_{d}$ and avoid two sphere-shaped obstacles of radius $r_{1}=r_{2}=0.8$ with the angular velocities $\Omega_{1}=0.8 \mathrm{rad/s}$ and $\Omega_{2}=0.4 \mathrm{rad/s}$, respectively. The total simulation time for the trajectory is set to be $t_{\mathrm{total}}=20\mathrm{s}$. The sampling time is $\Delta T =0.1$, and hence $k_{\mathrm{max}}=200$. For the measurement noise of the obstacle, we set $\sigma^{2}=0.1$.
\begin{figure}[tp]
 \centering
    \makebox[0pt]{%
    \includegraphics[width=3.7in]{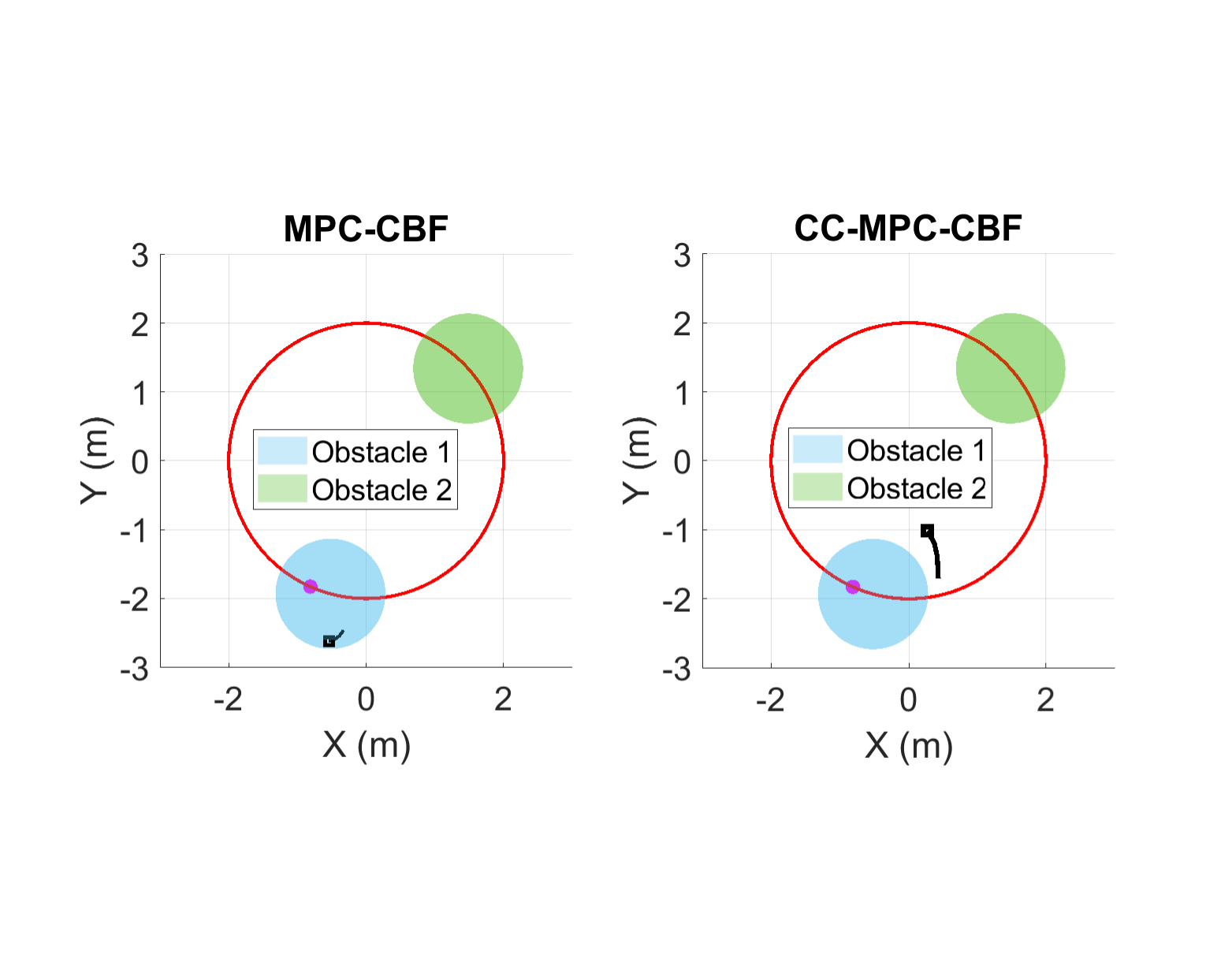}}
    \caption{A snapshot of the tracking behaviors of the deterministic MPC-CBF and CC-MPC-CBF at time $k=90$ (top-down view): The red circle denotes the reference trajectory; the magenta solid point and black box maker represent the reference point and tracking point, respectively. Note that the black box is followed by a tail, which relates to the history tracking data.}
    \label{Comparison_Fail_succ}
\end{figure}
\begin{figure}[tp]
 \centering
    \makebox[0pt]{%
    \includegraphics[width=3.5in]{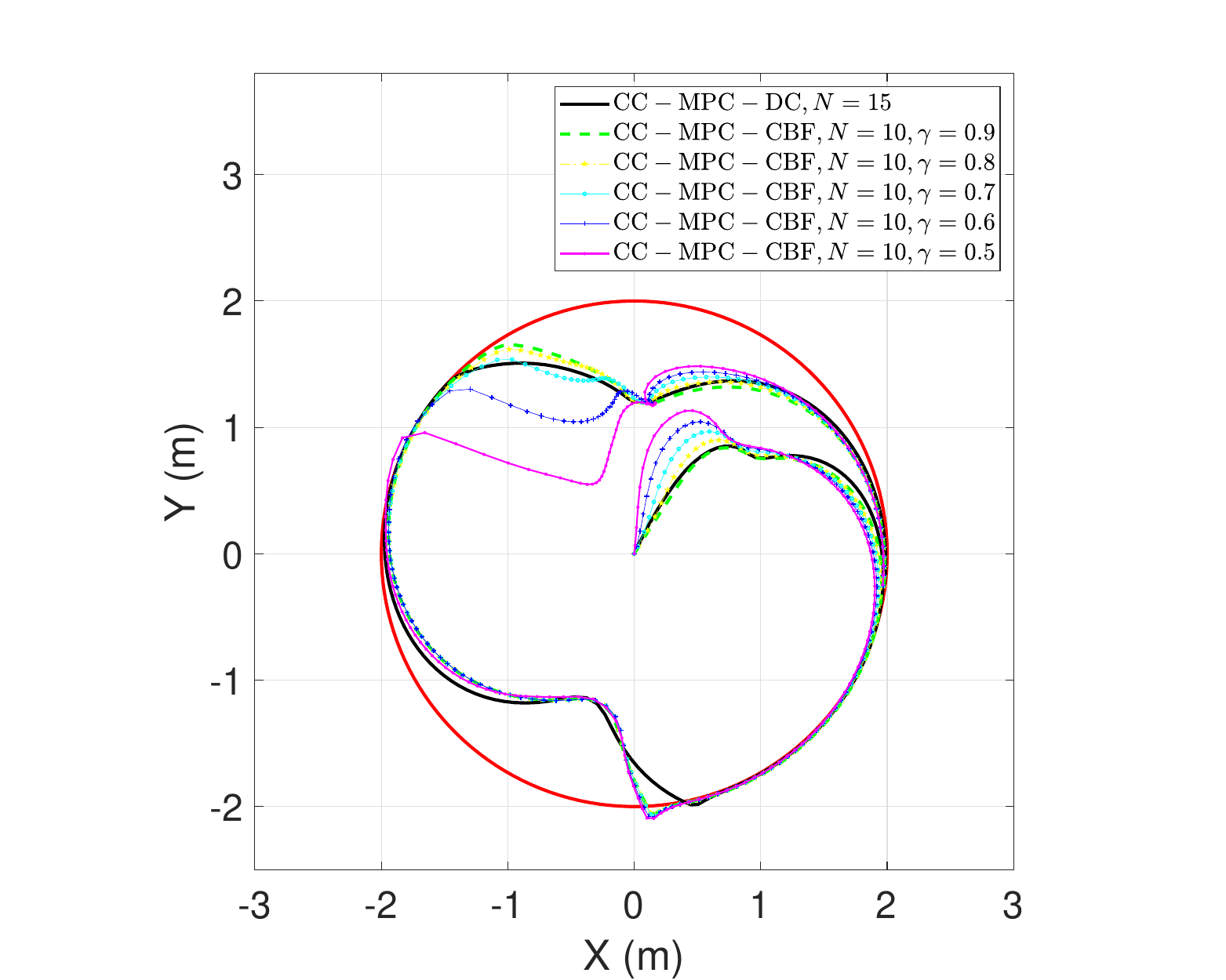}}
    \caption{A comparison of the tracking performance of CC-MPC-DC and the proposed CC-MPC-CBF (with different $N$ and $\gamma$) (top-down view): The red circle denotes the reference trajectory; The curves with different line styles and colors indicate the tracking trajectories with different parameters.}
    \label{Diff_Gamma_Behaviour}
\end{figure}
% \begin{figure}[tp]
%  \centering
%     \makebox[0pt]{%
%     \includegraphics[width=2.8in]{Succ_Rate_Diff_Noise.eps}}
%     \caption{Successful collision avoidance rate for deterministic MPC-CBF and CC-MPC-CBF with different levels of noise.}
%     \label{Comparison_Fail_succ_Rate}
% \end{figure}
\subsubsection{CC-MPC-CBF Settings}
For the CC-MPC-CBF optimization problem formulated in Section~\ref{Main_results}, we set the weighting matrices to be $\mathbf{P}=\mathbf{Q}=1000\mathbf{I}_{6\times 6}$ and $\mathbf{R}=\mathbf{I}_{3\times 3}$. The prediction horizon $N=15$. The CBF is defined by:
\begin{equation}\label{CBF_Real}
    \begin{split}
        h(\mathbf{p}_{k},\mathbf{o}_{k})=\|\mathbf{p}_{k}-\mathbf{o}_{\mathbf{p},k}\|_{\mathbf{W}}^{2}-1.
    \end{split}
\end{equation}
The weighting matrix $\mathbf{W}$ for obstacle 1 and obstacle 2 are set to be $\mathbf{W}_{\mathbf{o},1}=\mathrm{diag}([\frac{1}{r_{1}^2}\mathbf{I}_{1\times 3}])$ and $\mathbf{W}_{\mathbf{o},2}=\mathrm{diag}([\frac{1}{r_{2}^2}\mathbf{I}_{1\times 3}])$, respectively. The system is subject to state constraint $\mathbf{x}_{k}\in\mathcal{X}$ and input constraint $\mathbf{u}_{k}\in\mathcal{U}$,
\begin{equation}
\begin{aligned}
& \mathcal{X}=\left\{\mathbf{x}_k \in \mathbb{R}^n: \mathbf{x}_{\min } \leq \mathbf{x}_k \leq \mathbf{x}_{\max }\right\}, \\
& \mathcal{U}=\left\{\mathbf{u}_k \in \mathbb{R}^m: \mathbf{u}_{\min } \leq \mathbf{u}_k \leq \mathbf{u}_{\max }\right\}.
\end{aligned}
\end{equation}
The lower and upper bounds are
\begin{equation}
\mathbf{x}_{\max }, \mathbf{x}_{\min }= \pm 5 \cdot \mathbf{I}_{6 \times 1}, \mathbf{u}_{\max }, \mathbf{u}_{\min }= \pm 4\cdot  \mathbf{I}_{3 \times 1}.
\end{equation}
Moreover, we define the collision avoidance probability threshold to be $\delta=0.03$ (thus confidence level $0.97$), which corresponds to $3\sigma$ confidence ellipsoid.
\subsubsection{Computer and Solver Settings}
The simulated data is processed using Matlab 2019a on a 64-bit Intel core i7-9750H with a 2.6-GHz processor. Both the optimization problems formulated by CC-MPC-CBF, which is given by~\eqref{MPC_CBF_Problistic}, and the standard MPC problems mentioned in Section~\ref{Sequantial_Approach}, have been solved using a non-convex solver called IPOPT, which is implemented in the CaSadi framework that employs the YAMIP modeling language.
\subsection{Performances of Different Approaches}
% \subsection{Experiments}
% For the experiment, the real data are collected in the Control Systems Group laboratory of TU/e. The experiment setup is shown in Fig. 4. A quadrotor moves in a 3 m × 3 m × 3 m region with an approximate speed of 1.5 m/s. We use the motion capture system to obtain the relative pose and regard the data as the true
% quantity. The docking board is for  the safe landing of the two drones. Similar to step 4), 20 measurements are recorded (with a unified sampling rate of 10 Hz, around 2s) for comparison.

% Thus, the real-time planning is done on a remote laptop, namely a Lenovo 430 Thinkpad with 3rd generation Core i57-9800X at 4.40 GHz processor and 16GB of RAM. To allow communication and low-level attitude control of the Crazyflie, we use the open-source Robot Operating System (ROS)~\cite{ROS_Cite} package developed for the Crazyflie~\cite{ROS_Crazyflie}. The $\mathrm{crazyflie\_ros}$ API is used to communicate for interprocess communication, subscribing to pose information, and publishing setpoints over the Crazyradio PA USB dongle.

% We use CaSadi as the solver to . The collision probability thresholds are set to $\delta = 0.03$. 

% The time step used in MPC is $\Delta T =0.02 \mathrm{s}$ and the total number of steps is $N=10$.

% Data measurements composed of $(x,y)$ positions and velocities are collected at $20Hz$ with a data capacity set to 100 samples.
\subsubsection{Tracking Performance of the Deterministic MPC-CBF and CC-MPC-CBF}
We compare the tracking performance of deterministic MPC-CBF and CC-MPC-CBF in a stochastic setting. Fig.~\ref{Comparison_Fail_succ} shows a snapshot of the tracking behavior at $k=90$. The actual trajectory (denoted by the black box) fails to avoid the moving obstacles when using deterministic MPC-CBF to track the desired trajectory $\mathbf{r}_{d}$. In contrast, CC-MPC-CBF successfully avoids collisions and achieves satisfying tracking performance. The successful collision avoidance rates with different levels of noise for both methods are provided in TABLE~\ref{Noise_Comp_1} and TABLE~\ref{Noise_Comp_2}. As shown in TABLE~\ref{Noise_Comp_1} and TABLE~\ref{Noise_Comp_2}, the performance of the deterministic MPC-CBF in ensuring successful MOCA is evaluated under different levels of noise, with $\sigma^{2}$ ranging from $0.0001$ to $0.5$ ($100$ trials conducted for each case). The results indicate that the deterministic MPC-CBF approach is not robust enough to guarantee successful MOCA in the presence of noise. As the noise level increases, the successful MOCA rate decreases. In contrast, the CC-MPC-CBF is robust to large sensing uncertainties as the successful collision avoidance rate remains $100\%$ even when $\sigma^{2}$ goes to $0.6$. 
% \begin{table*}[tp]
% \footnotesize
% \centering
% \caption{Successful collision avoidance rate for deterministic MPC-CBF and CC-MPC-CBF with different levels of noise.}
% \begin{tabular}{ccccccccc}
%   \hline
%  Noise & $\sigma^{2}=0$ & $\sigma^{2}=0.0001$ & $\sigma^{2}=0.005$ & $\sigma^{2}=0.01$ & $\sigma^{2}=0.1$ & $\sigma^{2}=0.3$ &$\sigma^{2}=0.5$ & $\sigma^{2}=0.6$\\
% Deterministic MPC-CBF (\%)&$100$&$91$&$53$&$38$&$17$&$12$&$8$&$7$\\
% CC-MPC-CBF (\%)&$100$&$100$&$100$&$100$&$100$&$100$&$100$&$100$\\
%   \hline
% \end{tabular}
% \label{Noise_Comp}
% \end{table*}
\begin{table*}[tp]
\centering
\caption{Successful collision avoidance rate versus noise variance $\sigma^{2}$ for deterministic MPC-CBF and CC-MPC-CBF.}
\begin{tabular}{ccccc}
  \hline
  Noise & $\sigma^{2}=0$ & $\sigma^{2}=0.0001$ & $\sigma^{2}=0.005$ & $\sigma^{2}=0.01$ \\
  \hline
  Deterministic MPC-CBF (\%) & 100 & 91 & 53 & 38 \\
  CC-MPC-CBF (\%) & 100 & 100 & 100 & 100 \\
  \hline
\end{tabular}
\label{Noise_Comp_1}
\end{table*}
\begin{table*}[tp]
\centering
\caption{(Continued) Successful avoidance rate versus noise variance $\sigma^{2}$ for deterministic MPC-CBF and CC-MPC-CBF.}
\begin{tabular}{ccccc}
  \hline
  Noise & $\sigma^{2}=0.1$ & $\sigma^{2}=0.3$ & $\sigma^{2}=0.5$ & $\sigma^{2}=0.6$ \\
  \hline
  Deterministic MPC-CBF (\%) & 17 & 12 & 8 & 7 \\
  CC-MPC-CBF (\%) & 100 & 100 & 100 & 100 \\
  \hline
\end{tabular}
\label{Noise_Comp_2}
\end{table*}
\subsubsection{Feasibility of CC-MPC-CBF with Different Parameters}
To verify the observations presented in Proposition~\ref{Robustness_to_stochasity}, we conduct $100$ trials with varying levels of noise and set $\delta = 0.97 > \frac{1+\mathrm{erf}(0.5)}{2}$ to evaluate the feasibility of CC-MPC-CBF. The results are summarized in TABLE~\ref{Noise_feasi_1} and TABLE~\ref{Noise_feasi_2}, which indicates that the feasibility of~\eqref{MPC_CBF_Problistic} decreases rapidly with an increase in $\sigma^2$. Furthermore, we recorded the instances of infeasibility, i.e., the time step $k$ at which it occurred, with different $\sigma^2$. The average value of $k$ across the 100 trials is provided in TABLE~\ref{Noise_feasi_1} and TABLE~\ref{Noise_feasi_2}, which reveals that this value decreases with increasing $\sigma^2$. This finding further supports the claim made in Proposition~\ref{Robustness_to_stochasity}. 

The results presented in TABLE~\ref{gamma_1}  and TABLE~\ref{gamma_2} provide compelling support for the claim made in Remark~\ref{Hyperparameter_Gamma}. We initially set $\delta=0.97$ and $\sigma^{2}=1$. As indicated in TABLE~\ref{gamma_1} and TABLE~\ref{gamma_2}, some scenarios may be infeasible under the setting that $\gamma=0.5$. Next, we test the feasibility of CC-MPC-CBF with different $\gamma$. As shown in TABLE~\ref{gamma_1}  and TABLE~\ref{gamma_2}, the results show that decreasing $\gamma$ leads to a higher rate of infeasibility. As mentioned in Remark~\ref{Hyper_N}, increasing the prediction horizon $N$ may lead to infeasibility. To demonstrate this, we again set $\delta=0.97$, $\sigma^{2}=1$, and $\gamma=0.5$ and observe in TABLE~\ref{Prediction_Hor} that a larger $N$ results in more significant feasibility issues.
\subsubsection{Tracking Performance of the CC-MPC-DC and CC-MPC-CBF}
In Fig.~\ref{Diff_Gamma_Behaviour}, we compare the tracking performance of CC-MPC-DC~\cite{Chance_Constraied} and the proposed CC-MPC-CBF (with different $N$ and $\gamma$). Our results demonstrate that CC-MPC-CBF maintains the advantages of early obstacle avoidance when compared to CC-MPC-DC. Specifically, when we set $N=10$ and $\gamma=0.7$, CC-MPC-CBF starts avoiding obstacles at almost the same time as CC-MPC-DC with $N=15$. This means that we can achieve earlier obstacle avoidance without increasing the prediction horizon. Moreover, Fig.~\ref{Diff_Gamma_Behaviour} shows that an increase in $\gamma$ results in more conservative behavior.
\subsubsection{Feasibility Comparison: CC-MPC-CBF vs Sequential Implementation} 
By comparing the results shown in TABLE~\ref{Noise_feasi_1} and TABLE~\ref{Noise_feasi_2}, we can observe that the sequential implementation approach improves the feasibility of the CC-MPC-CBF formulation significantly, as stated in Remark~\ref{CC_MPC_CBF_Feasibility_Analysis}. 
\begin{table*}[t]
\footnotesize
\centering
\caption{Feasibility of CC-MPC-CBF with versus noise variance $\sigma^{2}$.}
\begin{tabular}{cccccccc}
  \hline
  Noise & $\sigma^{2}=0.7$ & $\sigma^{2}=0.8$ &$\sigma^{2}=0.9$ & $\sigma^{2}=1$ & $\sigma^{2}=2$ \\
  \hline
  CC-MPC-CBF (\%) & 100 & 100 & 100 & 93 & 60 \\
  Infeasible at time $k$ & \textbackslash & \textbackslash & \textbackslash & $k=191$ & $k=123$ \\
  Sequential Implementation (\%) & 100 & 100 & 100 & 100 & 100 \\
  Infeasible at time $k$ & \textbackslash & \textbackslash & \textbackslash & \textbackslash & \textbackslash \\
  \hline
\end{tabular}
\label{Noise_feasi_1}
\end{table*}
\begin{table*}[tp]
\footnotesize
\centering
\caption{(Continued) Feasibility of CC-MPC-CBF versus noise variance $\sigma^{2}$.}
\begin{tabular}{cccccccc}
  \hline
  Noise & $\sigma^{2}=3$ & $\sigma^{2}=4$ & $\sigma^{2}=5$ & $\sigma^{2}=6$ \\
  \hline
  CC-MPC-CBF (\%) & 35 & 10 & 2 & 0 \\
  Infeasible at time $k$ & $k=71$ & $k=48$ & $k=33$ & $k=23$ \\
  Sequential Implementation (\%) & 89 & 81 & 73 & 67 \\
  Infeasible at time $k$ & $k=182$ & $k=167$ & $k=144$ & $k=122$ \\
  \hline
\end{tabular}
\label{Noise_feasi_2}
\end{table*}
% For the feasibility impact of $\gamma$ and $N$ on the sequential implementation approach, it is worth noting that the conclusion aligns with the CC-MPC-CBF.
\subsubsection{Computation efficiency of the sequential implementation approach}
When the settings described in Section~\ref{Sim_Conf} are applied, it is important to note that the sequential approach proves to be more time-efficient compared to solving the CC-MPC-CBF optimization problem in a single run. In our simulation, after performing 100 trials, the average execution time of the sequential implementation amounts to $21.3\mathrm{s}$, which is considerably faster than the $38.8\mathrm{s}$ it takes to solve the CC-MPC-CBF optimization problem in one go. 
\section{Conclusion}\label{Conclusions}
In this paper, the MOCA problem in a stochastic scenario with unbounded uncertainties is addressed through the proposed CC-MPC-CBF approach, which combines MPC with chance-constrained CBFs to handle stochastic uncertainties and provides probabilistic guarantees on safety. A sequential implementation approach is also developed to improve the feasibility of the optimization of CC-MPC-CBF, which includes two sub-optimization problems, i.e., a standard MPC and a predictive safety filter. The effectiveness of the developed algorithms is demonstrated through numerous simulation results in a real-life MOCA example, which highlight their advantageous properties, such as robustness to large sensing uncertainties, high success rate for MOCA, fast computation speed, and feasibility for real-world applications. Overall, the proposed approach provides a promising solution to address the MOCA problem in stochastic systems with unbounded uncertainties.

\begin{table*}[tp]
\footnotesize
\centering
\caption{Feasibility of CC-MPC-CBF versus hyperparameter $\gamma$.}
\begin{tabular}{ccccccccccc}
  \hline
  Hyperparameter $\gamma$ & $\gamma=1$ & $\gamma=0.9$ & $\gamma=0.8$ & $\gamma=0.7$ & $\gamma=0.6$ & $\gamma=0.5$ \\
  \hline
  CC-MPC-CBF (\%) & 100 & 100 & 100 & 100 & 100 & 93 \\
  Infeasible at time $k$ & \textbackslash & \textbackslash & \textbackslash & \textbackslash & \textbackslash & $k=191$ \\
  \hline
\end{tabular}
\label{gamma_1}
\end{table*}
\begin{table*}[tp]
\footnotesize
\centering
\caption{(Continued) Feasibility of CC-MPC-CBF versus hyperparameter $\gamma$.}
\begin{tabular}{cccccccccc}
  \hline
  Hyperparameter $\gamma$ & $\gamma=0.4$ & $\gamma=0.3$ & $\gamma=0.2$ & $\gamma=0.1$ & $\gamma=0$ \\
  \hline
  CC-MPC-CBF (\%) & 87 & 61 & 42 & 25 & 13 \\
  Infeasible at time $k$ & $k=168$ & $k=133$ & $k=123$ & $k=77$ & $k=48$ \\
  \hline
\end{tabular}
\label{gamma_2}
\end{table*}
\begin{table*}[tp]
\footnotesize
\centering
\caption{Feasibility of CC-MPC-CBF with different prediction horizons $N$.}
\begin{tabular}{cccccccc}
  \hline
 Prediction horizons $N$ & $N=5$ & $N=10$ &$N=15$ & $N=20$ & $N=30$& $N=50$&$N=60$\\
CC-MPC-CBF (\%)&$100$&$100$&$93$&$88$&$60$&$0$&$0$\\
Infeasible at time $k$& \textbackslash & \textbackslash & $k=198$& $k=191$ & $k=133$ &$k=1$& $k=1$\\
  \hline
\end{tabular}
\label{Prediction_Hor}
\end{table*}

\bibliography{main}
\end{document}